\newtheorem{theorem}{{Theorem}}
\newtheorem{property}[theorem]{{Property}}
\newtheorem{proposition}[theorem]{{Proposition}}
\title{
Network Coding Based Evolutionary Network Formation for Dynamic Wireless Networks 
}
\author{Minhae~Kwon,~\IEEEmembership{Student~Member,~IEEE}
        and~Hyunggon~Park,~\IEEEmembership{Senior~Member,~IEEE} 
\IEEEcompsocitemizethanks{\IEEEcompsocthanksitem M. Kwon and H. Park are with the Department of Electronic and Electrical Engineering, Ewha Womans University, Seoul, Republic of Korea (E-mail: minhae.kwon@ewhain.net, hyunggon.park@ewha.ac.kr), Corresponding author: Hyunggon Park. 
%
\IEEEcompsocthanksitem This work was supported by the National Research Foundation of Korea (NRF) grant funded by the Korea government (MSIT) (No. NRF-2017R1A2B4005041)}

}
\begin{document}



\IEEEtitleabstractindextext{
\begin{abstract} 
In this paper, we aim to find a robust network formation strategy that can adaptively
evolve the network topology against network dynamics in a distributed manner. 
We consider a network coding deployed wireless ad hoc network where source nodes are
connected to terminal nodes with the help of intermediate nodes. 
We show that mixing operations in network coding can induce packet anonymity that 
allows the inter-connections in a network to be decoupled. 
This enables each intermediate node to consider complex network
inter-connections as a node-environment interaction such that the Markov decision
process (MDP) can be employed at each intermediate node. The
optimal policy that can be obtained 
by solving the MDP 
provides each node with optimal amount of changes in
transmission range given network dynamics (e.g., the number of nodes in the range and  channel
condition). 
Hence, the network can be adaptively and optimally evolved by responding to the network dynamics. 
The proposed strategy is used to maximize  long-term utility, which  is achieved by considering both current network
conditions and future network dynamics. 
We define the utility of an action to include network
throughput gain and the cost of transmission power. 
We show that the resulting network of the proposed strategy eventually converges to
stationary networks, which maintain the states of the nodes. 
Moreover, we propose to determine initial transmission ranges 
and initial
network topology that can expedite the convergence of the proposed algorithm. 
Our simulation results confirm that the proposed strategy builds
a network which adaptively changes its topology in the presence of network dynamics.  
Moreover, the proposed strategy outperforms existing strategies  in
terms of system goodput and successful connectivity ratio.

%
%
%
%
%
%

\end{abstract}

\begin{IEEEkeywords}
Network Formation, Network Topology Design, Markov Decision Process, Network Coding, Wireless Ad Hoc Networks, Mobile Network,  Dynamic Network
\end{IEEEkeywords}
}

\maketitle

\IEEEdisplaynontitleabstractindextext

\IEEEpeerreviewmaketitle

\IEEEraisesectionheading{\section{Introduction}}

The connected world which  began with representative services 
such as connected cars, networked unmanned aerial vehicles (UAVs), multi-robot systems, and the Internet of
things (IoT), results in networks with inherent dynamics.  
The network entities of such services generally have high mobility,  
which causes frequent
changes in member nodes associated with these networks.  
Moreover,  
the links between the network entities may be under unstable channel conditions 
with high link failure
rates. 
Hence, it is essential  to form robust networks against such dynamics by
adaptively reformulating inter-connections among network entities. 
However, solving this problem based on conventional centralized
solutions requires very high computational complexity such that it cannot be
practically considered. 
Rather, it can be solved by decentralized and spontaneous network formation
strategies that 
enable each network entity to make proactive and adaptive decisions on network
topology against the network dynamics. 
In  wireless ad hoc networks, for example, source nodes are connected
to terminal nodes via mobile intermedia nodes. 
Then, the network topology can be determined in
a distributed manner based on decisions of mobile intermediate nodes for
transmission ranges.
In order to overcome network dynamics, each mobile intermediate  
node may strategically change its transmission range. 
Such distributed strategies for network formation can be essential in circumstances
where only limited
infrastructures can be available, e.g., disaster networks or military networks. 


\begin{figure}[tb]
\centering
\begin{center}
\includegraphics[width = 8.5cm]{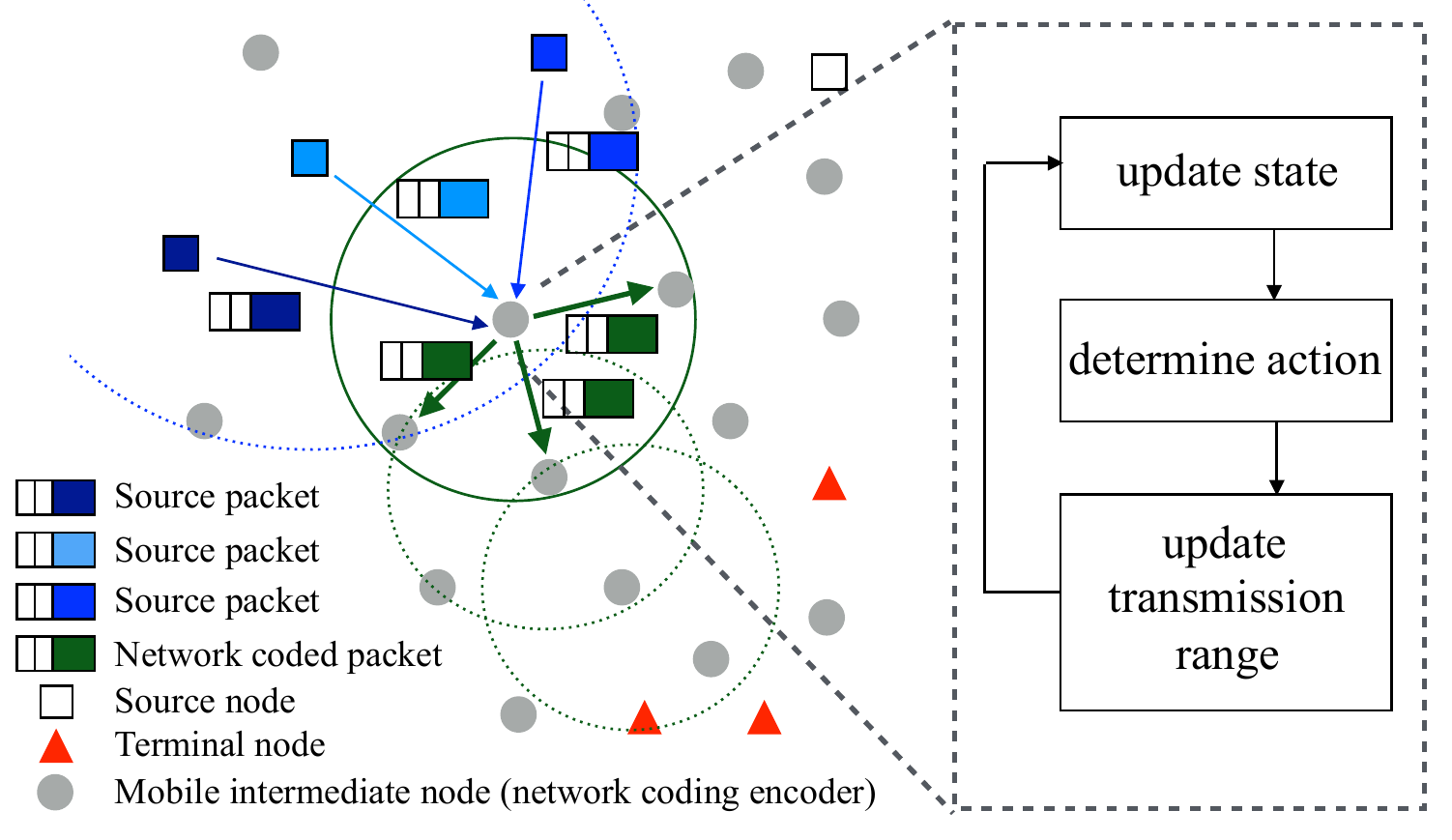}
\caption{
 An illustrative example of wireless ad hoc network where source nodes are connected
 to terminal nodes via mobile intermedia nodes  with network coding functionality. 
 }
 \vspace{-1cm}
\label{fig:example}
\end{center}
\end{figure}

However, it is not straightforward to design decentralized strategies that enable  
each network entity to make its \emph{own} and \emph{optimal} decisions, 
because the network entities are intimately coupled. 
Specifically, the network entities can be tightly inter-connected, so that 
the impact of small changes from a network entity may propagate over a 
large number of entities.
Thus, each network
entity should consider the corresponding responses 
associated with its decisions to make optimal decisions. This may require significantly high computational  complexity or
may not be feasible in practice. 
Therefore, it is essential for 
the design of
decentralized strategies to decouple the inter-connections among network entities. 

In this paper, we show that the inter-connections among network entities can be
decoupled  
by deploying network coding, which is referred to as  
\emph{network
decoupling}. 
Unlike the 
conventional store-and-forward approach,  
 network coding~\cite{Ahlswede2000} allows an intermediate entity to combine
multiple packets that it has received and to forward the \emph{mixed} packets.
For a network coding enabled wireless ad-hoc network (which is widely considered as a network model of a
connected world),  
a packet passes through many intermediate entities. Thus, it 
can be mixed with other packets multiple times. This
leads to \emph{packet
anonymity}, where all packets in the network eventually have identical information
including 
their terminal nodes. 
%
%
%
Packet anonymity allows an entity to consider the other entities as 
its \emph{environment}. 
Hence, complicated inter-connections among network entities can be decoupled, and 
only the connection directly associated with the entity is considered as  a one-hop connection. 
This leads to  \emph{network decoupling}  
so that the interactions between network entities can be
interpreted as a \emph{node-environment interaction} at each entity. 
An illustrative example of wireless ad hoc network with network coding is shown in Fig.~\ref{fig:example}. In this example, source nodes are connected
to terminal nodes via mobile intermedia nodes  with network coding functionality. 

%
%
%
%

Motivated by the node-environment interaction, we use 
an MDP
to find a decentralized strategy which is referred to as a policy 
for network formation. 
%
We consider wireless entities to be autonomous decision-making agents and the 
state of an agent is defined as the 
number of
effective nodes. 
Here, effective nodes are 
the entities that have \emph{successfully} received packets from the agent.
The probability density function of the states is modeled by the Poisson point
process (PPP), which is widely used to characterize the behavior of mobile nodes. 
The action of an agent is defined as the amount of increasing or decreasing
transmission range, which is the outcome of the policy 
for the
current state of the agent.  The  policy is
optimal if it enables the agent to  
maximize long-term utility. 


%
%
%

As a node
increases its transmission range, 
the number of hops required to 
reach the terminal node
decreases, without loss of generality,  leading to an improvement in network throughput. 
However,  extending the  
transmission range increases  transmission power consumption and causes more inter-node
interference. 
This is explicitly captured by the utility function, which represents  
both network throughput improvement 
and the additional  transmission power.   
Therefore, the optimal policy enables each entity to successively determine  
the optimal changes in transmission range at each state, such that 
the entities can strike a balance 
between network throughput gain and power consumption. 
Finally, the consequences of the distributed decisions from each entity 
eventually determine the 
network topology. 

Note that the proposed strategy allows the resulting network topology to 
evolutionarily adapt 
against network dynamics. This is because the state is defined by 
the effective nodes, which are  
directly dependent on link failure rates (i.e., channel
condition) and node mobility (i.e., node distribution). 
For example, a 
larger transmission range may be required in a channel with 
higher link failure rates for 
the target number of effective nodes.
Similarly, 
an agent can increase its transmission range  to sustain connectivity
in the case of  
sparse node density. 
The proposed strategy is also robust against frequent changes in member nodes of  the considered network, which is widely observed in mobile networks. 
This is because the behavior of existing nodes is not affected by individual network members, instead it is only affected by \emph{the number} of effective nodes included in its own transmission range. 

%
%

Unlike conventional optimal solutions that focus on maximizing 
immediate utility, 
the 
proposed optimal policy determined by the MDP 
can provide a long-term strategy, which determines
actions by explicitly considering 
future dynamics in the network. 
Specifically, the actions taken by the optimal policy can 
maximize the long-term utilities, which are expressed as 
the sum of discounted utilities over time. 
The discount factor can be determined by considering the consistency of 
network conditions. Therefore, the actions determined by the proposed policy 
can consider both  current and future network dynamics.


The proposed system consists of two phases: initialization and adaptation. 
In the initialization phase, the optimal policy for each intermediate node is
found and the state can be initialized. 
As will be shown in this paper, optimal actions can lead the network formation result to certain topologies, 
referred to as 
\emph{stationary networks}. Hence, we design the initial network to be close to the stationary network. 
In the adaptation phase, each node adaptively and optimally changes its 
transmission range based on 
the optimal policy for the current state 
induced by network dynamics.

 
The main contributions of this paper are summarized as follows. 
\begin{itemize}
\item We show that network coding can  
  lead to packet anonymity where both the information and terminal of all packets in network asymptotically become identical, 
  \item We show that the packet anonymity of network coding allows inter-connections among network nodes to be decoupled into node-environment interactions at each node, which is referred to as network decoupling, 
\item We formulate the problem of network topology formation in an MDP framework 
  and provide a decentralized solution to the network formation strategy, 
\item 
The proposed strategy improves network robustness by adaptively rebuilding its topology in the presence of network dynamics which includes unstable channel conditions with high link failure rates, and high mobility of network nodes that causes frequent changes in member nodes associated with the considered  network, 

%
%

\item 
The proposed strategy is a foresighted strategy that chooses the action  maximizes a long-term utility by considering future network dynamics, 
\item The proposed strategy can determine the optimal transmission range  
  that balances  network throughput improvement and 
  transmission power consumption, 
\item 
The resulting network of the proposed strategy converges to  stationary networks,
  and
\item We propose how to initialize a network such that the speed of 
  convergence to the stationary network can be improved. 
\end{itemize}

Note that the focus of this paper is neither on the code design for network coding which has been extensively studied in prior works~\cite{wang2008,
Kim2009jsac, Khreishah2010, Bourtsoulatze2014TCOM, Bourtsoulatze2014TMM, Hulya2009,
Douik2016, Xie2016}, nor on perfect delivery which needs $100\%$ reliability. 
Rather, our focus is on robust network formation based on network coding, which can
proactively reform network topology against network dynamics in a decentralized
manner.

                   

The rest of the paper is organized as follows. In Section~\ref{sec:related_works}, we
briefly review related works. The wireless model for mobile users
and detailed process of
data delivery based on network coding are discussed in Section~\ref{sec:system}. The
MDP-based framework and a distributed network formation 
strategy are proposed
in Section~\ref{sec:MDP} and Section~\ref{sec:distributed}, respectively. Simulation
results are presented in Section~\ref{sec:simulation}, and conclusions are drawn in
Section~\ref{sec:conclusion}.  

For the reader's convenience, we summarize notations frequently used in this
paper in Table~\ref{tab:notation}. 
\begin{table*}[tb]
\caption{Summary of Notations }
\label{tab:notation}
\renewcommand{\arraystretch}{1}
\begin{center}
{\begin{tabular}{  l | p{5.3cm} || l | p{5.3cm}  }
\hline
        {\textbf{Notation}}& \textbf{Description} & \textbf{Notation} & \textbf{Description}   \\
		\hline
\hline
         {$v_i$}& a node with index $i$& $\mathbf S$ & state space  \\
		\hline
         {$\mathbf H_t$}& an index set of source nodes for a terminal node $v_t$&$s$ & a state \\
		\hline
      {$\mathbf H$}&an index set of source nodes & $\mathbf A$ & action space\\
		\hline
       {$\mathbf T_h$}&an index set of terminals for a source node $v_h$ for $h \in \mathbf H$&$a$& an action \\
		\hline
        $\mathbf T$& total index set of all terminals in $\mathcal G_{\tau}$& $\bar a_{ \tau}$ & transmission range at time $\tau$\\
		\hline
    	 $\mathbf V$& an index set of intermediate nodes & $\rho$&discount factor\\
		\hline
 {$\lambda$}& node density in network&  $\omega$ & weight in \eqref{eqn:U} \\		
  \hline
 {$\Gamma_{i, \tau}(v_j, \mathbf T_h) $} & node value function & $R (s, s')$ &reward\\
  		\hline
  {$\delta_{i,\tau}(v_j)$}& the Euclidean distance from node $v_i$ to node $v_j$ at time $\tau$ & $\pi$& policy\\
		\hline
		 {$\bar \delta_{i,\tau}$}&the radius of intermediate node $v_i$'s transmission range at time $\tau$& $V_{\tau}(s)$ & state-value function\\
		\hline
       {$\Delta_{\tau}(v_j, v_t)$}& the number of hops between $v_j$ and $v_t$ at time $\tau$&$\boldsymbol \sigma$&limiting distribution\\
		\hline  	       
        {$\boldsymbol\Delta_{\tau}(v_j, \mathbf T_h)$}&a set of $\Delta_{\tau}(v_j, v_t)$ for all $t \in \mathbf T_h$ & $\sigma_s$ & limiting probability of state $s$  \\
		\hline
$\Phi$& network coding function&$\mathbf P$ &  state transition matrix\\
		\hline
 \end{tabular}}
\end{center}
\vspace{-0.5cm}
\end{table*}

%
%
%
%


\section{Related Works}
\label{sec:related_works}


Since network coding was  first introduced in \cite{Ahlswede2000}, it has shown
excellent ability to improve throughput, robustness and complexity. 
The beginning of network coding was for throughput gain in a multicast
scenario. In~\cite{Ahlswede2000}, it is shown that network coding can achieve
maximum throughput via the max-flow min-cut theorem, and it is further proved
that linear network coding can achieve the upper bound of capacity in~\cite{Li2003}.
Many works in 
network coding have been studied for random linear network coding
(RLNC)~\cite{Ho2003} as it is a simple, randomized encoding approach that is 
decentralized~\cite{RandomizedNC03, PracticalNC03}. 
As well as throughput gain, it has been shown that 
network coding also enhances robustness against packet loss  
in lossy wireless networks~\cite{Koetter2008, Zhang2008information,
Esmaeilzadeh2017}.

Another advantage of network coding is that there is a lower complexity requirement 
for network formation
compared to a conventional store-and-forward approach. 
In a conventional store-and-forward approach, it is difficult 
to find the optimal routing path
that can achieve the capacity upper bound. Even though an optimal routing
solution exists in some cases, such as the Steiner tree in multicast routing, finding the solution  is
still very complex within a centralized setting~\cite{Jain2003}. Network coding,
however, can transform complex network formation problems into 
 low-complexity distributed problems.  
For example, a distributed solution 
that satisfies optimality condition to minimum delay and minimum energy
consumption is proposed in \cite{Cui2004}. 
Decentralized algorithms for network formation that can minimize cost
per unit capacity are proposed in~\cite{Lun2005}.

Even though network coding can reduce complexity in general, it is known
that finding an optimal solution in network coding with multiple multicasts is an NP-hard
problem~\cite{Yan2006}. Hence, suboptimal but practical solutions are often 
studied \cite{traskov2006, mhkwon2017SPL}. A well-known
practical solution to 
network formation is proposed based on linear optimization in~\cite{traskov2006}.  
In~\cite{mhkwon2017SPL}, a distributed network formation solution is developed for network coding
deployed wireless networks that includes multi-source multicast flows. Using a 
game theoretical approach, each node in the network determines its transmission power and the
use of network coding operations.  

Network formation strategies for dynamic network conditions in conventional
routing schemes have been widely studied in the context of a self-organizing network. 
Protocols for self-organization of wireless sensor
networks where there exists a large number of static nodes with energy constraints are described in  \cite{sohrabi2000}.  
In~\cite{Gao2016}, an emergency communication system based on UAV-assisted
self-organizing network is considered. In this work, UAVs are used as a strong relay
node to form a relay network in the air, and the nodes on the ground formed a
self-organizing network automatically with the help of UAVs. 
In~\cite{xu2016}, a dynamic topology control that prolongs the lifetime of a wireless
sensor network is proposed based on a non-cooperative game.

However, there have been few studies in network coding deployed network formation
strategies that are robust in the presence of network dynamics.


\section{Network Coding Deployed Wireless Ad Hoc Networks}
\label{sec:system}

\subsection{Wireless Ad Hoc Network Model }

We consider a  wireless ad hoc network modeled by a directed graph $\mathcal G_\tau$, 
comprising a set $\mathcal V(\mathcal G_\tau)$ of nodes together with a set $\mathcal
E (\mathcal G_\tau)$ of directed links at time  $\tau$. 
Let $v_i \in \mathcal V(\mathcal G_\tau)$ be the $i$-th element in $\mathcal V(\mathcal G_\tau)$, and there are three types of nodes in the network: source, intermediate and terminal.
Let $\mathbf H$ be an index set of source nodes and its element is denoted as $h \in \mathbf H$. An index set of terminals for a source node $v_h$ is denoted as $\mathbf T_h$, and the data that $v_h$ generates at time $\tau$ are denoted as $x_{h,\tau}$. 
In this paper, we  consider the multi-source multicast flows that have multiple
source nodes, and each source node has an independent set of terminal nodes. 
Specifically, a source node $v_h$ for $h \in \mathbf H$ aims to deliver its data
$x_{h,\tau}$ to multiple terminal nodes $v_t, \forall t \in \mathbf T_h$ so that
$\sum_{h \in \mathbf H}| \mathbf T_h|$ flows are simultaneously considered, where $|
\cdot |$ denotes the size of a set. 
The total index set of all terminals in $\mathcal G_\tau$ is denoted as $\mathbf T = \bigcup_{h \in \mathbf H} \mathbf T_h$, and  
the number of source nodes and terminal nodes are denoted by $N_H$ and $N_T$, respectively. 
The source nodes can only transmit data but cannot 
receive data, and the terminal nodes can only receive data but cannot transmit data,
unlike the source nodes. 

In cases where the source nodes are not able to directly transmit data to the terminal
nodes,
intermediate nodes can relay the data by receiving and transmitting the data.
Let $v_i$ for $ i \in \mathbf V$ be an intermediate node where $\mathbf V$ denotes an
index set of intermediate nodes, and $N_V$ be the number of intermediate nodes. Then,
the total number of nodes in the network can be represented by $|\mathcal V(\mathcal
G_\tau)|=N_H + N_V + N_T$. 

We consider the intermediate nodes as wireless mobile devices 
that can move around in a bounded region   
with energy constraints. 
We use a stochastic geometry model 
to capture the distribution of intermediate nodes  to model the characteristics of mobility; the number of intermediate nodes
in a bounded region follows an independent homogeneous PPP with node density
$\lambda$, which is 
the expected number of Poisson points~\cite{Andrews2010,
Huang2014}. Moreover, each intermediate node can adjust its transmission power, which
determines the range of potential delivery of data from the node, which is referred to as
\emph{transmission range}. 
Let $\bar \delta_{i,\tau}$ 
and $\delta_{i, \tau}(v_j)$ be 
the radius of the transmission range
of $v_i$ and the Euclidean distance from node $v_i$ to node $v_j$
at time $\tau$, respectively. 
Then, $v_j$ is
located in the transmission range of $v_i$
if  $\delta_{i, \tau}(v_j) \le \bar \delta_{i,\tau}$ and $v_j$ can receive the data
from $v_i$. 
We assume that links between nodes may be disconnected with probability $\beta$,
which is referred to as the link failure rate.  
Hence, the probability that 
$v_j$ in the transmission range of $v_i$ 
can successfully receive the data
from $v_i$ is given by $1-\beta$.


The deployment of intermediate nodes naturally leads to a multi-hop ad hoc network, 
and thus, the network throughput highly depends on the selection of paths constructed
by the nodes. 
Therefore, node $v_i$ needs to choose a node $v_j$, which can relay the data to terminal
nodes $\mathbf T_h$ better than
other neighbor nodes in terms of node values.
The node value of $v_j$ from the perspective of $v_i$
is evaluated by the  
\emph{node
value function} 
$\Gamma_{i, \tau}(v_j, \mathbf T_h)$,
expressed as
\begin{equation}
\Gamma_{i, \tau}(v_j, \mathbf T_h) = f(\delta_{i, \tau}(v_j), \boldsymbol\Delta_{ \tau}(v_j, \mathbf T_h)) \label{eqn:node_value_function}
\end{equation}
where $\boldsymbol\Delta_{ \tau}(v_j, \mathbf T_h)= \{ \Delta_{ \tau}(v_j, v_t) |
\forall t \in \mathbf T_h \}$ and $\Delta_{ \tau}(v_j, v_t)$ denotes the number of
hops between $v_j$ and $v_t$ at time $\tau$. The function $f(\delta_{i, \tau}(v_j),
\boldsymbol\Delta_{ \tau}(v_j, \mathbf T_h)) : (\mathbb R, \mathbb W^{|\mathbf
T_h|\times 1}) \rightarrow \mathbb R $ 
 is a decreasing function of  $\delta_{i, \tau}(v_j)$ and $\boldsymbol\Delta_{ \tau}(v_j, \mathbf T_h)$, where $\mathbb R$ denotes the field of real numbers and 
 $\mathbb W = \{ 0, \mathbb
 Z^+\}$ denotes the whole numbers which includes zero and the positive integers
 $\mathbb Z^+$. 
By defining the node value as a decreasing function of the distance and the number of
hops from $v_j$ to
$v_t$, 
$\Gamma_{i, \tau}(v_j, \mathbf T_h)$ increases
as $v_j$ is located closer to $v_i$, and $v_j$ is connected to $v_t, \forall t \in
\mathbf T_h$ with a smaller number of hops. Therefore, 
$v_i$ consumes lower transmission power 
with a smaller transmission range and reduces delay for data delivery for appropriate selection of 
$v_j$. 

\subsection{Network Coding Based Encoding Process }
\label{subsec:NC}


A source node $v_h$ for $h \in \mathbf H$ generates a set of data $x_{h,\tau}=
\left\{x_{h,\tau}^{(1)}, \ldots, x_{h,\tau}^{(L)} \right\}$ at time $\tau$ and it
broadcasts
$x_{h,\tau}$ with the transmission power $\bar {a}_h=g(\bar \delta_h)$,
where the function $g: \mathbb R \rightarrow \mathbb R$ is determined 
based on a path loss
model of wireless channels.  
We assume that the radius of transmission range $\bar \delta_h$ of $v_h$
is stationary
(i.e., time independent), so that the subscription $\tau$ is omitted. If an
intermediate node $v_i$ is located in the transmission range of $v_h$ at time $\tau$,
$v_i$ receives $x_{h,\tau}$, and puts $x_{h,\tau}$ into its buffer $\mathcal L_{i}$,
i.e., $x_{h,\tau} \in \mathcal L_{i}$ wherein data are sorted by time stamp $\tau$
with the oldest time stamp at the head of the queue~\cite{chou2007}. 
Note that the packet has a limited life span (e.g., time to live (TTL) in an internet
packet) such that the packets with an expired time stamp can be discarded. 
%
For simplicity, we assume that the output capacity of a node is a single packet size such that a node  transmits a single packet per unit time~\cite{Katti2006, nad2004, topakkaya2011},
and a node can receive multiple individual packets by applying multipacket reception techniques~\cite{Cloud2012, mirrezaei2014}.



The intermediate node $v_i$ performs network coding operations by combining packets
with the same time stamp  
in $\mathcal L_{i}$ and generates encoded data $y_{i, \tau+1}$ at time $\tau+1$
expressed as  
\begin{align}
y_{i,\tau+1} 
&=\sum_{h=1}^{N_H}\bigoplus  \left( C_{hi, \tau+1} \otimes x_{h,\tau} \right) 
\label{eqn:NC_source}
\end{align}
where $C_{hi, \tau+1}$  denotes the global coding coefficient of $v_i$ for source
data
$x_{h, \tau}$.  The network coding operations are performed in the Galois field (GF)
and the operators $\oplus$ and $\otimes$ denote the addition and multiplication in
GF,
respectively. 
When the encoding process is performed in~\eqref{eqn:NC_source}, the source data with
the same time stamp are combined together, and a packet $p_{i,\tau+1}$ is constructed
as 
\begin{equation*}
p_{i,\tau+1} = [\tau, C_{1i, \tau+1}, \ldots, C_{N_Hi, \tau+1}, y_{i,\tau+1}]
\end{equation*}
which has the time stamp of the combined source data $\tau$, the global coding
coefficient $C_{hi, \tau+1}, \forall h \in \mathbf H$ as the header, and the encoded
data $y_{i,\tau+1}$ as a payload.
An index set of terminals for $p_{i,\tau+1}$ denoted by $\mathbf T_{p_{i,\tau+1}}$
can be expressed as  
\begin{equation}
\mathbf T_{p_{i,\tau+1}} = \cup_{h \in \{ h| C_{hi, \tau+1} \ne 0, h \in \mathbf H
\}} \mathbf T_h.
\label{eqn:T_p}
\end{equation}
This is because $p_{i,\tau+1}$ needs to be delivered to all terminals of combined
source data, 
i.e., for all $v_t \in \mathbf T_h$ and $h \in \{ h| C_{hi, \tau+1} \ne 0,
\forall h \in \mathbf H \}$.  


If the intermediate node $v_i$ receives the encoded packets $y_{h, \tau+ \alpha}$ at
time $\tau+\alpha$ , it recombines the received data and generates the encoded data
$y_{i, \tau+\alpha+1}$ at time $\tau+\alpha+1$, i.e.,  
\begin{align}
&y_{i,\tau+\alpha+1} \notag\\
&= \sum_{y_{j,\tau+\alpha} \in \mathcal L_{i}} \bigoplus \left( c_{ji} \otimes y_{j,\tau+\alpha} \right) \label{eqn:NC1}
\\
&= \sum_{y_{j,\tau+\alpha} \in \mathcal L_{i}} \bigoplus \left( c_{ji} \otimes \left(\sum_{h=1}^{N_H}\bigoplus  \left( C_{hj, \tau+\alpha} \otimes x_{h,\tau} \right) \right) \right)\notag\\
&= \sum_{h=1}^{N_H}\bigoplus \left( \sum_{y_{j,\tau+\alpha} \in \mathcal L_{i}} \bigoplus (c_{ji} \otimes C_{hj, \tau+\alpha}) \right) \otimes x_{h,\tau}\notag\\
&= \sum_{h=1}^{N_H}\bigoplus  \left( C_{hi, \tau+\alpha+1} \otimes x_{h,\tau} \right) \notag
\end{align}
where $\alpha > 0$,  and 
$c_{ji}$\footnote{The time stamp for $c_{ji}$ is omitted 
because the local coding coefficient is used only for one time slot.} denotes the local
coding coefficient for data from $v_j$ to $v_i$. In this paper, network coding is
implemented based on RLNC~\cite{HO2006} so that $c_{ji}$ is
uniformly and randomly chosen from GF with a size of $2^M$ (GF($2^M$)), i.e., $c_{ij} \in$ GF($2^M$). 
However, the proposed strategy is not limited to RLNC, and deterministic code
designs~\cite{wang2008,
Kim2009jsac, Khreishah2010, Bourtsoulatze2014TCOM, Bourtsoulatze2014TMM, Hulya2009,
Douik2016, Xie2016} can be considered as well.

For a large-scale multi-hop wireless network, 
the process of recombining incoming packets in~\eqref{eqn:NC1} can be performed
significantly many times,  
which eventually allows the recombined packet  
to include all 
source data. Therefore, 
the terminal  set of all packets in the network asymptotically converges to $\mathbf T$
by making all packets  identical. This is defined as \emph{packet anonymity} of network
coding, which is expressed in Proposition~\ref{def:pkt_ano}.

\begin{proposition}[Packet Anonymity]
  \label{def:pkt_ano}
Network coding can asymptotically make 
both the 
information and terminal of each packet identical. 
\end{proposition}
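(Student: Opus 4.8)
The plan is to track, for each packet, the set of source indices that actually contribute to its payload, and to show that this set grows monotonically under recombination until it saturates to all of $\mathbf{H}$; the claim about terminals then follows immediately from \eqref{eqn:T_p}. First I would associate to each packet $p_{i,\tau}$ its \emph{support} $S(p_{i,\tau}) = \{ h \in \mathbf{H} : C_{hi,\tau} \neq 0 \}$, so that by \eqref{eqn:T_p} we have $\mathbf{T}_{p_{i,\tau}} = \bigcup_{h \in S(p_{i,\tau})} \mathbf{T}_h$. For a packet freshly generated at a source $v_h$ we have $S = \{h\}$, and the goal is to prove that $S(p_{i,\tau}) \to \mathbf{H}$ as the number of recombination stages grows, which forces $\mathbf{T}_{p_{i,\tau}} \to \bigcup_{h\in\mathbf{H}} \mathbf{T}_h = \mathbf{T}$ and makes the payload a linear combination of all source data.

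Second, I would analyze a single recombination step, i.e.\ \eqref{eqn:NC1}. When $v_i$ mixes the packets $\{y_{j,\tau+\alpha}\}$ in its buffer with local coefficients $c_{ji}$, the new global coefficient of source $h$ is $C_{hi,\tau+\alpha+1} = \sum_{j} c_{ji} \otimes C_{hj,\tau+\alpha}$, so that $S(p_{i,\tau+\alpha+1}) \subseteq \bigcup_{j} S(y_{j,\tau+\alpha})$ always, with equality unless the chosen $c_{ji}$ lie on a proper subvariety of $\mathrm{GF}(2^M)$ where some coordinate sum vanishes. Since in RLNC each $c_{ji}$ is drawn uniformly from $\mathrm{GF}(2^M)$, a Schwartz--Zippel argument bounds the probability of any such cancellation by $O(N_H / 2^M)$, which is made negligible by taking $M$ large; for deterministic code designs the non-cancellation property is imposed by construction. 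Hence, with probability approaching one, the support is non-decreasing along every transmission and strictly increases whenever a relay mixes in a packet carrying a source not yet present.

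Third, I would establish saturation. Because the network is connected in the sense that, over time (through mobility and the chosen transmission ranges), every relay lies on some path carrying each active source toward its terminals, the packet streams originating at distinct sources eventually meet at common relays; an induction on the number of hops from the sources then shows that after a number of recombination stages on the order of the network's effective diameter, every circulating packet has support equal to $\mathbf{H}$. Passing to the limit over recombination stages yields $S \equiv \mathbf{H}$, so every packet encodes a linear combination involving all of $x_{1,\tau},\dots,x_{N_H,\tau}$ (identical information content, i.e.\ identical span) and has $\mathbf{T}_{p} = \mathbf{T}$ (identical terminal set), which is exactly the assertion of the proposition.

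The main obstacle I anticipate is making ``asymptotically'' and ``each packet identical'' precise: the coefficient vectors themselves are not literally equal, so ``identical'' must be interpreted as ``carrying an encoding of the same full set of source data, with the same terminal set,'' and the limiting statement requires the mild connectivity/recurrence property above so that every source's contribution actually reaches every relay. The one quantitative point needing care is controlling the cumulative $\mathrm{GF}$-cancellation probability uniformly over the growing number of recombinations, which I would handle by a union bound over stages together with the freedom to enlarge $M$ (or by invoking one of the deterministic constructions already cited in the paper).
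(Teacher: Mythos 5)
Your proposal is correct and takes essentially the same route as the paper, which justifies the proposition only informally: repeated recombination via \eqref{eqn:NC1} eventually makes every circulating packet include all source data, so that by \eqref{eqn:T_p} its terminal set becomes $\mathbf T$ and all packets become identical in that sense. Your support-set bookkeeping, the Schwartz--Zippel-type bound on GF cancellations under RLNC, and the explicit connectivity/recurrence assumption simply make rigorous what the paper leaves implicit.
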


We next consider the impact of the packet anonymity 
on the node value. 
Let $\Phi$ be the network coding function in \eqref{eqn:NC1}. Then, the node value
function $\Gamma_{i, \tau}(v_j, \mathbf T_h)$ in~\eqref{eqn:node_value_function} that is 
transformed by the network coding function $\Phi$ can
be expressed as 
\begin{align}
\Phi(\Gamma_{i, \tau}(v_j, \mathbf T_h)) &= \Phi \left(f(\delta_{i, \tau}(v_j), \boldsymbol\Delta_{ \tau}(v_j, \mathbf T_h))\right) \label{eqn:phi_V_1} \\
&= f(\delta_{i, \tau}(v_j), \boldsymbol\Delta_{ \tau}(v_j, \mathbf T)) \label{eqn:phi_V_2}\\
&= \Gamma_{i, \tau}(v_j, \mathbf T). \label{eqn:NC_NVF}
\end{align}
The equality between \eqref{eqn:phi_V_1} and \eqref{eqn:phi_V_2} is based on packet
anonymity. $\mathbf T$
in \eqref{eqn:phi_V_2} is constant, so that the node value is only a function of $v_j$, 
which concludes \eqref{eqn:NC_NVF}. 
Therefore, we conclude that if the network coding function $\Phi$ is
employed in intermediate nodes, 
multi-hop
connections to terminals (i.e.,  $\mathbf T_h$) need not be considered for node $v_i$. 
Rather,  only 
the links directly associated with it should be considered 
as in a one-hop
connection (i.e., $v_j$), leading to   
\emph{network decoupling} described in Proposition~\ref{def:nc_decouple}. 
\begin{proposition}[Network Decoupling] 
  \label{def:nc_decouple}
Network coding can decouple one-hop connections from the 
overall network formation. 
\end{proposition}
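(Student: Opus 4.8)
The plan is to leverage Proposition~\ref{def:pkt_ano} to collapse the multi-hop dependence of the node value function into a purely single-hop quantity, and then to argue that this collapse is exactly what removes the coupling between the decision problems faced by distinct intermediate nodes. So the proof is essentially an unpacking of the derivation in \eqref{eqn:phi_V_1}--\eqref{eqn:NC_NVF}, supplemented by an interpretation at the level of the formation problem.

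First I would recall the node value function $\Gamma_{i,\tau}(v_j,\mathbf T_h)$ from \eqref{eqn:node_value_function}, noting that through its argument $\boldsymbol\Delta_{\tau}(v_j,\mathbf T_h)$ it encodes the entire downstream topology between $v_j$ and the terminal set $\mathbf T_h$, and hence depends on the relaying decisions of all nodes beyond $v_i$. Applying the network coding map $\Phi$ and invoking packet anonymity, every packet handled by $v_i$ asymptotically carries the full terminal set $\mathbf T$, so the relevant hop vector becomes $\boldsymbol\Delta_{\tau}(v_j,\mathbf T)$ and the value reduces to $\Gamma_{i,\tau}(v_j,\mathbf T)$ as in \eqref{eqn:NC_NVF}. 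Since $\mathbf T=\bigcup_{h\in\mathbf H}\mathbf T_h$ is a fixed set independent of $h$, the source index — and therefore the identity of the flow — drops out of the evaluation: $v_i$'s ranking of candidate relays $v_j$ no longer references which source generated the data it is forwarding.

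Next I would translate this algebraic simplification into a statement about the network formation problem itself. In the store-and-forward setting, node $v_i$'s optimal transmission range depends on the end-to-end paths it participates in, which in turn depend on the decisions of all downstream nodes, yielding a coupled network-wide optimization. After the reduction above, the only data $v_i$ needs in order to evaluate an action is the set of one-hop neighbours inside its transmission range together with their values $\Gamma_{i,\tau}(v_j,\mathbf T)$; everything beyond the first hop is summarised by the constant target set $\mathbf T$ and can be treated as an exogenous environment. Hence the joint formation problem factorises into per-node problems that interact only through the local neighbour configuration, which is precisely the claimed decoupling of one-hop connections from the overall formation — and, as noted, this is what licenses the node-environment interaction and the MDP formulation in Section~\ref{sec:MDP}.

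The main obstacle I anticipate is the asymptotic nature of packet anonymity: Proposition~\ref{def:pkt_ano} is a limiting statement, valid once sufficiently many recombination steps have occurred, so strictly speaking the decoupling is exact only for networks that are large or deep enough (or in steady state), while for finite, shallow networks there is a residual dependence of $\boldsymbol\Delta_{\tau}(v_j,\mathbf T_{p})$ on the not-yet-fully-mixed flows. I would handle this by phrasing the proposition as an asymptotic equivalence, carrying the vanishing error term through \eqref{eqn:phi_V_1}--\eqref{eqn:NC_NVF}, and using the assumed monotonicity and regularity of $f$ to conclude that the induced ranking of relays — and hence the decoupled per-node decision — is stable under that perturbation.
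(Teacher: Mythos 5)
Your proposal is correct and follows essentially the same route as the paper: you invoke packet anonymity (Proposition~\ref{def:pkt_ano}) to replace $\mathbf T_h$ by the constant terminal set $\mathbf T$ in the node value function, exactly as in \eqref{eqn:phi_V_1}--\eqref{eqn:NC_NVF}, so that $\Gamma_{i,\tau}(v_j,\mathbf T)$ depends only on the one-hop neighbour $v_j$ and the formation problem decouples into per-node (node-environment) decisions. Your extra care about the asymptotic nature of packet anonymity is a reasonable refinement of the paper's informal treatment but does not alter the argument.
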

Network decoupling can lead to the design of decentralized solutions by only considering 
node-environment interactions at each node, which are captured by 
the MDP framework 
introduced in Section~\ref{sec:MDP}. 
The characteristics of network coding function $\Phi$ are shown in  
Fig.~\ref{fig:NC_effect}. 
\begin{figure}[t]
\centering
\begin{center}
\includegraphics[width = 8.5cm]{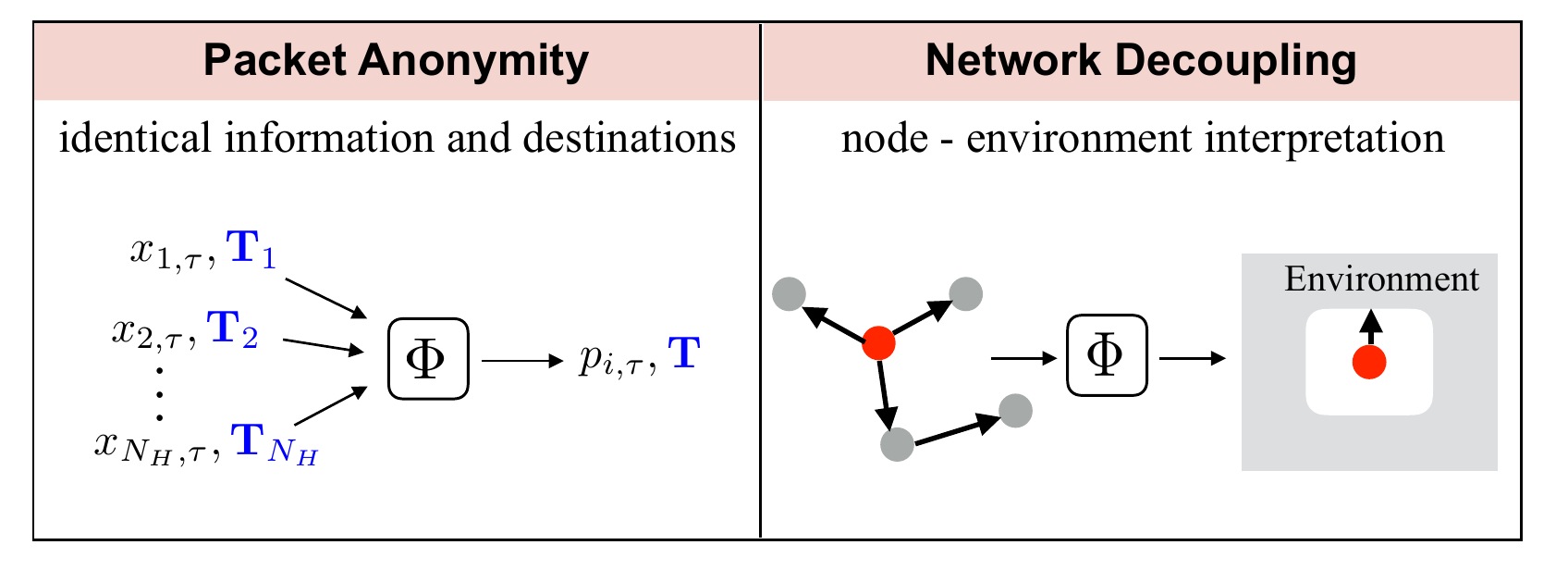}
\caption{ Two characteristics of the network coding function $\Phi$. }
\label{fig:NC_effect}
\end{center}
\vspace{-1cm}
\end{figure}

\subsection{Source Reconstruction at Terminal Nodes }

We next discuss the decoding process. 
Let 
$\mathbf H_t = \{ h | t \in \mathbf T_h, \forall h \in \mathbf H \}$ 
be 
an index set of source nodes for a terminal node $v_t$ 
and $ \tilde p_1, \cdots, \tilde p_K \in \mathcal L_t$ be the packets with the same
time stamp of source data that $v_t$ received. 
Then, we can construct 
a vector of network coded data $\tilde {\mathbf y} =
[ \tilde y_1, \cdots, \tilde y_K  ]^T$ and 
the global coding coefficient
matrix $\tilde {\mathbf C}$ is expressed as 
\begin{align}
\tilde {\mathbf C} 
&= 
\begin{bmatrix}
    \tilde {C}_{11} &  \cdots & \tilde {C}_{N_H 1} \\
    & \vdots & \\
    \tilde {C}_{1K} &  \cdots & \tilde {C}_{N_H K}
        \end{bmatrix}
        = [\tilde {\mathbf c}_1, \cdots, \tilde{\mathbf c}_{N_H} ] 
\end{align}
where $\tilde{\mathbf c}_h = [\tilde {C}_{h1}, \cdots, \tilde {C}_{hK} ]^T$ for $1 \le h \le N_H$.   

Node $v_t$ is then able to perfectly reconstruct its source
data, if $\tilde {\mathbf C}$
satisfies the following two conditions:   
1) $\tilde{\mathbf c}_h \ne \mathbf 0_{K}$ for all $h \in \mathbf H_t$, 
where $\mathbf 0_{K}$ denotes all zero vector
with length $K$,
and 2) $\tilde
{\mathbf C}'$ is full-rank, where $\tilde {\mathbf C}'$ is the matrix 
where all $\tilde
{\mathbf c}_h = \mathbf 0_K$ for $1 \le h \le N_H$ are removed from $\tilde {\mathbf
C}$.
Condition 1) ensures that the received packets include all data that
should be
reconstructed. 
This is a widely accepted condition under the wireless network settings
because of the broadcasting nature of wireless communications~\cite{Katti2005, Liu2007infocom}. 
The condition 2) guarantees that the decoding process can uniquely reconstruct 
data $\hat x_h$ for all $h \in \mathbf H_t$. 
Because it is shown that RLNC makes the global coding coefficient matrix
be full-rank with high
probability~\cite{PracticalNC03, RandomizedNC03}\footnote{
It is shown in 
\cite{RandomizedNC03} 
that if RLNC is employed, the probability that
the global coding coefficient matrix is full-rank is at least
${(1-{|\mathcal D|}/{2^M})}^{|\mathcal E (\mathcal G)|}$. In general settings,
a GF size of $2^M$ is significantly larger than the number of terminals $|\mathcal D|$ in the 
network. Hence, it is widely accepted that the global coding coefficient matrix
is full-rank with high probability if RLNC is used.}, the condition 2) can be satisfied. 
The 
decoding process can then be implemented based on well-known approaches such as
Gaussian elimination in
a GF \cite{bathe1976}. 

While the conditions for perfect reconstruction can  generally be satisfied with high probability,  
some special applications (e.g., delay-sensitive applications, error-prone networks
with a high packet loss rate, etc.)
may cause a perfect reconstruction to fail. That is,  random mixing in the inter-session network
coding may lead to an increased decoding delay if only a subset of the coded
sources of interest  arrives at the terminal node. 
In this case, alternative decoding algorithms~\cite{ mhKWON2014WCNC,
Yan2013,mhKWON2016elsevier,KWON2016} can be deployed.

In the rest of this paper,  
we propose a distributed strategy for robust network formation.


\section{ MDP-Based Network Formation}
\label{sec:MDP}

%
%
%

In this section, we propose an MDP-based framework for network formation, where
intermediate nodes $v_i, \forall i \in \mathbf V$ of the network are considered as
autonomous decision making agents to find the optimal strategy.   
An illustrative overview for the proposed framework is shown in Fig.~\ref{fig:overview_all}.  
\begin{figure}[tb]
\centering
\begin{center}
\includegraphics[width = 8.5cm]{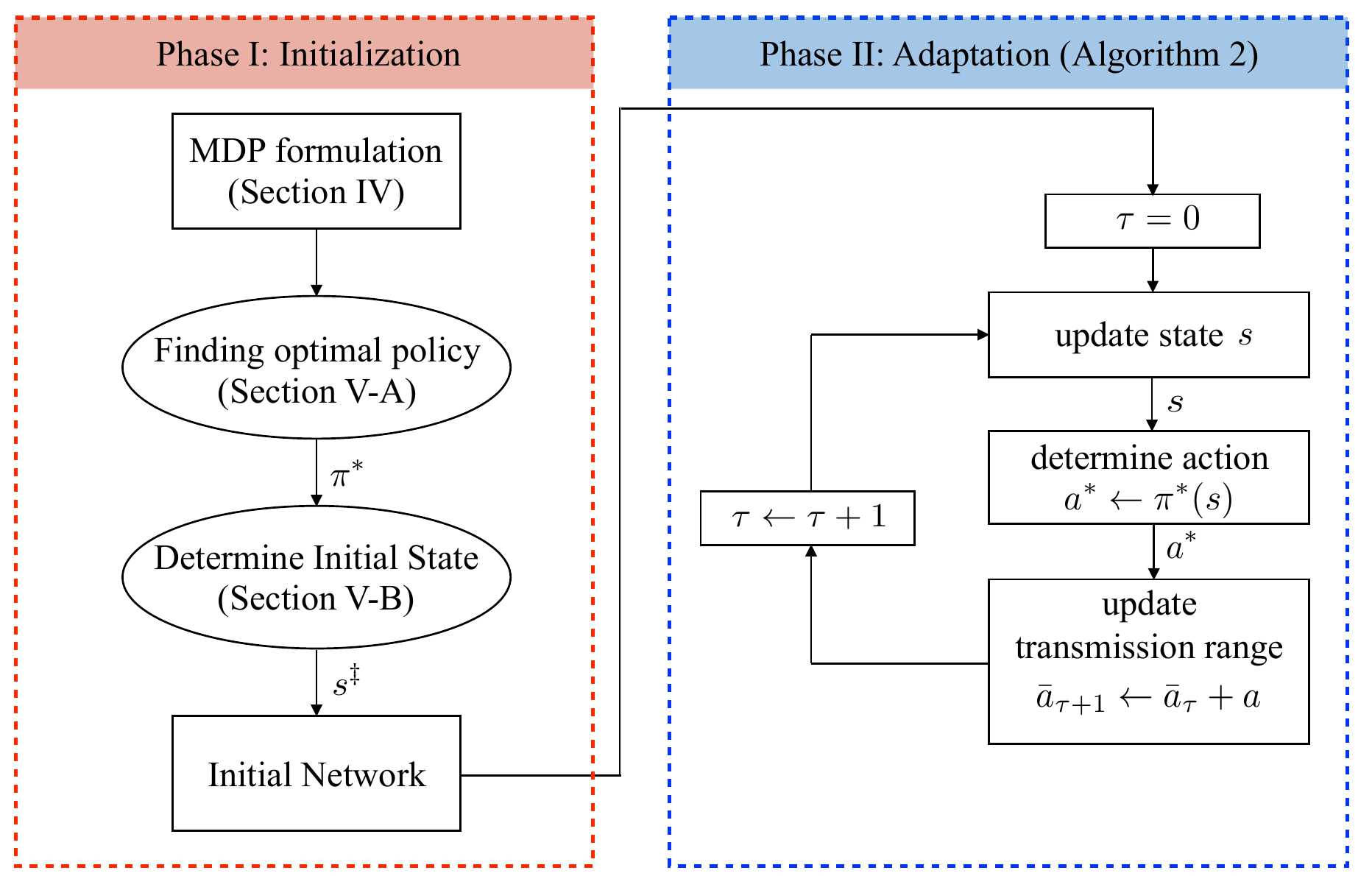}
\caption{ An illustrative overview of  the proposed system. }
\label{fig:overview_all}
\end{center}
\vspace{-0.8cm}
\end{figure}

For an agent $v_i$, an MDP is a tuple $\left< \mathbf S, \mathbf A, \right.$ $\left.
P(s'|s,a), U(s,a,s'), \rho  \right>$, where $\mathbf S$ is the state space, $\mathbf
A$ is the action space, and $P(s'|s,a): \mathbf S \times \mathbf A \times \mathbf S
\rightarrow [0,1]$ is the state transition probability that action $a \in \mathbf A$
in state $s \in \mathbf S$ leads to the next state $s' \in \mathbf S_i$, which is a
real number between $0$ and $1$. $U(s,a,s'): \mathbf S \times \mathbf A \times
\mathbf S \rightarrow \mathbb R $ is a utility obtained after transition to state
$s'$ from state $s$ with action $a$, and $\rho \in [0,1]$ is the discount factor. 
The details are explained as follows.

\subsubsection{State Space $\mathbf S$}
A state  $s \in \mathbf S$ represents the expected number of effective nodes in the
transmission range of the agent. 
Since a node can always be the effective node, at most $(N_V+N_T)$ nodes can be located  
in the transmission range, so that $1 \le s \le \lceil (N_V+N_T)/(1-\beta)\rceil$ in the channel
with a $\beta$ link failure rate.  

Note that the definition of the state allows the network to be robust against network
dynamics 
since a node can adaptively change its transmission ranges by considering node mobility and 
channel condition. 
If the network is static with node density $\lambda$, simply determining $\bar \delta_{i, \tau}$ can provide a solution for network topology, which directly determines the number of successfully received nodes, i.e.,  $s$, based on node density $\lambda$. 
If the network is dynamic, however, $s$ at time $\tau$ cannot be directly determined by 
$\bar \delta_{i, \tau}$ since $\lambda$ may not be a true value in the transmission range
of the agent because of node mobility and the link failure of the channel. Hence,  we 
design topology based on $s$ rather than $\bar \delta_{i, \tau}$, which allows a node to
adaptively change $\bar \delta_{i, \tau}$ against network dynamics,  leading to
a robust network.

\subsubsection{Action Space $\mathbf A$}
An action $a \in \mathbf A$ represents the increases in the 
transmission range as 
compared to a previous transmission range. Hence, the action at time $\tau$ becomes
$a = \bar a_{\tau} - \bar a_{ \tau-1}$, where $\bar a_{ \tau}$ and $\bar a_{ \tau-1}$
denote the transmission ranges at time $\tau$ and $\tau-1$, respectively. If
$a > 0$, the agent increases the 
transmission range 
(i.e., $ \bar a_{ \tau} > \bar a_{ \tau-1}$).
Similarly, if $a < 0$, the agent decreases the 
transmission range (i.e.,$ \bar a_{
\tau} < \bar a_{ \tau-1}$). The agent may keep the same transmission range by taking
action $a=0$.


\subsubsection{State Transition Probability $P(s'|s,a)$}
A state transition probability represents the probability that a node  
in state $s$
moves to state $s'$ if action $a$ is taken. 
Thus, $P(s'|s,a)$ means the probability that $s'$ effective nodes will be  
included in the
transmission range of the node in the next time stamp 
by taking action $a$ from current 
$s$ effective nodes.
Since the number of intermediate nodes in
a bounded region follows an independent homogeneous PPP
with node density $\lambda$, the state transition probability can be described as
in
Theorem~\ref{thm:STP}.
\begin{theorem}
\label{thm:STP}
The state transition probability $P(s'|s,a)$ is given by  
\begin{align}
P(s'|s, a) =
\begin{cases}
\frac{(\lambda a)^{\xi'-\xi} \cdot e^{-\lambda a}}{(\xi'-\xi)!}  & a > 0 \\
1 &  a = 0 \\
{\xi\choose{\xi'} }{(1- \frac{|a|}{\bar a_{ \tau}})}^{\xi'}{( \frac{|a|}{\bar a_{ \tau}})}^{\xi- \xi'}& a <0. 
\end{cases}
\label{eqn:stp_proof_re}
\end{align} 
where $\xi \triangleq \lceil \frac{s }{1-\beta} \rceil$ and $\xi' \triangleq \lceil \frac{s' }{1-\beta} \rceil$.

\end{theorem}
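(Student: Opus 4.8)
The plan is to recast the statement in terms of the number of nodes \emph{physically} present in the agent's transmission disk, and then apply the superposition and restriction (thinning) properties of the homogeneous PPP. Since a node lying in the transmission range delivers its packet successfully and independently with probability $1-\beta$, a state $s$ (the expected number of effective nodes) corresponds to $\xi=\lceil s/(1-\beta)\rceil$ nodes being physically located in the disk, and likewise $s'$ corresponds to $\xi'=\lceil s'/(1-\beta)\rceil$; it therefore suffices to derive the transition law on these integer counts, which is exactly how $\xi$ and $\xi'$ enter \eqref{eqn:stp_proof_re}. Throughout I would use the modeling convention of Section~\ref{sec:system} that the expected number of Poisson points covered by a transmission range equals $\lambda$ times the range, and that the only change to the in-range node set between consecutive decision epochs is the one induced by the range adjustment $a$.

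For $a>0$ the new disk contains the old one, so every node already in range stays in range and the count can only increase, by the number of Poisson points falling in the newly covered region. By the independence of the PPP over disjoint regions this increment is Poisson with mean $\lambda a$ and independent of the current count, so the count moves from $\xi$ to $\xi'\ge\xi$ with probability $(\lambda a)^{\xi'-\xi}e^{-\lambda a}/(\xi'-\xi)!$ (and with probability $0$ if $\xi'<\xi$). For $a=0$ the disk is unchanged, hence so are the node count and the state, giving $P(s'|s,0)=1$ when $s'=s$. For $a<0$ the range shrinks from $\bar a_\tau$ to $\bar a_\tau-|a|$; conditioned on the count in the current disk, the points are uniformly distributed with respect to the covered measure, which is proportional to the range, so each of the $\xi$ current points remains in the smaller disk with probability $(\bar a_\tau-|a|)/\bar a_\tau=1-|a|/\bar a_\tau$, independently across points by the complete-independence property of the PPP. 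The number of survivors is thus $\mathrm{Binomial}\!\left(\xi,\,1-|a|/\bar a_\tau\right)$, which yields the third line of \eqref{eqn:stp_proof_re}. Assembling the three cases completes the proof.

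The \textbf{main obstacle} is the $a<0$ case: one must justify that the per-node retention probability is exactly $1-|a|/\bar a_\tau$ --- which relies on the conditional-on-count uniformity of PPP points together with the Section~\ref{sec:system} convention that the covered measure scales linearly with the range --- and that the retention events are mutually independent, which is where the PPP's independence/thinning property is invoked; one also has to be careful that $\bar a_\tau$ in that line denotes the transmission range \emph{before} the negative action is applied, so that the new radius is $\bar a_\tau-|a|$ and the ratio of new to old radius is $1-|a|/\bar a_\tau$. A minor but necessary bookkeeping point is the passage between the (continuous) state $s$ and the integer count $\xi=\lceil s/(1-\beta)\rceil$, which is what makes the ceilings appear in the statement.
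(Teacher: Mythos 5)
Your proposal is correct, but it reaches \eqref{eqn:stp_proof_re} by a different route than the paper. You derive each case constructively from first principles of the homogeneous PPP: for $a>0$ the increment over the newly covered region is Poisson with mean $\lambda a$ and independent of the current count (independence over disjoint regions), and for $a<0$ you condition on the current count, invoke the conditional uniformity of Poisson points with respect to the covered measure, and obtain a $\mathrm{Binomial}\bigl(\xi,\,1-|a|/\bar a_\tau\bigr)$ survivor count; the ceilings enter exactly as in the paper, through $s=(1-\beta)\xi$. The paper instead \emph{verifies} the claimed formulas: it writes the Kolmogorov/Bayes relation $P(s'|s,a)\,P(s)=P(s|s',-a)\,P(s')$ with the Poisson marginals $P(s)=\Pr\{\xi;\bar a_\tau\}$, and shows by algebraic manipulation that the Poisson-increment form (forward, $a>0$) and the binomial form (reverse, $a<0$) satisfy this identity, handling $a=0$ via the short-transition-interval assumption. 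Your approach buys a direct justification of each line — in particular an explicit reason why the retention probability is $1-|a|/\bar a_\tau$ and why the retention events are independent, which the paper leaves implicit in its consistency check — while the paper's approach buys a single computation that exhibits the $a>0$ and $a<0$ cases as forward and reverse views of the same transition. Both arguments rest on the same modeling conventions you flag: the covered measure scales linearly with the range (Poisson mean $\lambda\bar a_\tau$), the state-to-count mapping $\xi=\lceil s/(1-\beta)\rceil$, and the assumption that between decision epochs the in-range node set changes only through the action $a$.
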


\begin{proof}
Based on 
the Kolmogorov definition of conditional probability,   
the state transition probability given in \eqref{eqn:stp_proof_re} can be expressed
as 
\begin{align}
P(s'|s, a) \cdot P(s) = P(s| s', -a) \cdot P(s').
\label{eqn:stp_aim}
\end{align}

%

Let $\xi $ be the number of nodes at time $\tau$ included in $\bar a_{\tau}$, and 
thus, the corresponding probability is given by 
\begin{align}
\Pr\{\xi ; \bar a_{ \tau}\} = \frac{(\lambda \bar a_{ \tau})^{\xi} \cdot e^{-\lambda \bar a_{ \tau}}}{\xi!}.  
\end{align}
By considering the link failure rate $\beta$ of the channel, 
the expected number of effective
nodes becomes $ s = (1-\beta) \cdot \xi$. Hence, $P(s) = \Pr\{\xi ;
\bar a_{ \tau}\}$ with integer value  
$\xi \triangleq \lceil \frac{s }{1-\beta} \rceil$. 

We assume that the 
state transition interval is short enough
under mild regularity
conditions~\cite{anderson2011}. 
Hence,  $a = 0$ (which does not change the transmission range)  does not lead to a state
transition, i.e., $s'=s$. 
Therefore, 
\begin{align*}
P(s'|s, a) \cdot P(s) &=  1 \cdot P(s) \\
&= 1 \cdot P(s') = P(s| s', -a) \cdot P(s'). 
\end{align*}

If $a > 0$ (which enlarges the transmission range), more nodes can be included in the 
transmission region. Hence, 
$s' \ge s$.
In this case,  $P(s'| s, a) \cdot P(s')$ in \eqref{eqn:stp_aim} can be derived as
\begin{align}
&P(s'|s,a) \cdot P(s) \notag\\
&= P(s'|s,a) \cdot \Pr\{\xi; \bar a_\tau\}\notag \\
&= \frac{(\lambda a)^{\xi'-\xi} \cdot e^{-\lambda a}}{(\xi'-\xi)!}  \cdot \frac{(\lambda \bar a_{\tau})^{\xi} \cdot e^{-\lambda \bar a_{\tau}}}{\xi!}\notag \\
&= \frac{(\lambda (\bar a_{\tau+1} - \bar a_{\tau}))^{\xi'-\xi} \cdot e^{-\lambda
(\bar a_{\tau+1} - \bar a_{\tau})}}{(\xi'-\xi)!}  \cdot \frac{(\lambda \bar
a_{\tau})^{\xi} \cdot e^{-\lambda \bar a_{\tau}}}{\xi!} \notag \\
&= \frac{(\lambda (\bar a_{\tau+1} - \bar a_{\tau}))^{\xi'-\xi} \cdot e^{-\lambda (\bar a_{\tau+1} )}(\lambda \bar a_{\tau})^{\xi} }
{(\xi'-\xi)! \xi!}
\cdot \frac{(\lambda \bar a_{\tau+1})^{\xi'}}{(\lambda \bar a_{\tau+1})^{\xi'}}
\cdot \frac{\xi'!}{\xi'!} \notag \\
&= \frac{(\lambda \bar a_{\tau+1})^{\xi'} e^{-\lambda (\bar a_{\tau+1} )} }{\xi'!} \cdot
\frac{\xi'!}{(\xi'-\xi)! \xi!} \cdot
(1-\frac{\bar a_{\tau}}{\bar a_{\tau+1}})^{\xi'-\xi} \cdot
(\frac{\bar a_{\tau}}{\bar a_{\tau+1}})^{\xi} \notag\\
&= \Pr\{\xi';\bar a_{\tau+1}\} \cdot {{\xi'}\choose{\xi}}\cdot (\frac{a}{\bar a_{\tau+1}})^{\xi'-\xi} \cdot (1 - \frac{a}{\bar a_{\tau+1}})^{\xi}\notag\\
&= P(s|s',-a) \cdot P(s'). \notag
\end{align}

Similarly,
$P(s'|s, a) \cdot P(s) = P(s| s', -a) \cdot P(s')$ 
in \eqref{eqn:stp_aim} is obtained for $a < 0$. 

Therefore, we conclude that the state transition probability can be expressed 
as \eqref{eqn:stp_proof_re}. 
\end{proof}
Theorem~\ref{thm:STP} implies that 
the state transition probability is the probability that
$(\xi' - \xi)$ nodes are included in the transmission range $a$ for 
$a >0$. 
If $a <0$, the probability that $\xi'$ nodes are in $\bar a_{\tau+1}$ 
given $\xi$ nodes in $\bar a_{\tau}$ is 
the probability that $(\xi- \xi')$ nodes are included in $|a|$.

\subsubsection{Utility Function $U(s,a,s')$}


We define the utility function of node $v_i$ as a quasi-linear function that 
consists of a reward and a cost, i.e., 
\begin{align}
U(s,a,s') = u+\omega \cdot R(s,s') - (1-\omega) \cdot  a
\label{eqn:U}
\end{align}  
where 
$R(s,s')$ is the reward function that represents immediate throughput improvement
given the state transition from 
$s$ to $s'$ at the cost of taking action $a$, which 
increases the transmission range. 
The cost intrinsically includes transmission power consumption at the node as well as the penalty for causing wireless inter-node interference. 
The weight $\omega$$(0\le \omega \le 1)$ can be used to balance 
the reward and the cost. For example, 
if $\omega = 1$, the cost associated with taking action $a$ can be 
ignored, but only the throughput improvement is  considered. 
Since a utility is generally non-negative, a constant $u$ is introduced in
\eqref{eqn:U}, and it can be
set such that $U(s,a,s')\ge 0$. 
The reward function $R(s,s')$ is defined as
\begin{align*}
R(s,s') = \gamma(s')-\gamma(s) 
\end{align*}
where $\gamma(s)$ denotes network throughput when the node is in state $s$, which is a concave increasing function. 


\subsubsection{Discount Factor $\rho$}
The discount factor $\rho \in [0,1]$ represents the 
degree of utility reduction over time, so that it determines the cumulative long-term
utility. The discount factor can be determined 
based on the consistency of the 
network condition (e.g.,~\cite{park2009spl, park2009tmm}).
For example, if the network condition is static, a large value of $\rho$ can be used by
imposing 
a high weight on the predicted future utilities whereas a lower value of $\rho$ needs to be
used in more dynamically changing network
conditions. 

Next, we show that the proposed framework satisfies the Markov property. 


\begin{theorem}
The tuple $\left< \mathbf S, \mathbf A, P(s'|s,a), U(s,a,s'), \rho  \right>$
satisfies the Markov property. 
\label{th:markov}
\end{theorem}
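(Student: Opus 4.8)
The plan is to verify the Markov property in its standard form: for every time $\tau$ and every realization of the history $(s_0,a_0,\ldots,s_{\tau-1},a_{\tau-1},s_\tau,a_\tau)$ one must show
\[
\Pr\{s_{\tau+1}=s'\mid s_0,a_0,\ldots,s_\tau,a_\tau\}=\Pr\{s_{\tau+1}=s'\mid s_\tau,a_\tau\}=P(s'|s_\tau,a_\tau),
\]
so that the one-step transition law is exactly the kernel $P(s'|s,a)$ already derived in Theorem~\ref{thm:STP}. First I would observe that the state $s_\tau$ is in one-to-one correspondence with the covered-node count $\xi_\tau=\lceil s_\tau/(1-\beta)\rceil$, and that the current radius $\bar a_\tau$ is the deterministic accumulation $\bar a_\tau=\bar a_{\tau-1}+a_{\tau-1}=\bar a_0+\sum_{k<\tau}a_k$ of the known initial radius and the realized actions. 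Thus conditioning on the history pins down both $\xi_\tau$ and $\bar a_\tau$ but introduces no randomness beyond that already present in the underlying spatial point process; the remaining task is to show that, given $\xi_\tau$ and the control $a_\tau$, the law of $\xi_{\tau+1}$ does not depend on anything else in the history.

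The core of the argument is the memorylessness of the homogeneous PPP, applied separately in the three regimes of $a=a_\tau$. For $a>0$, the new range $\bar a_{\tau+1}=\bar a_\tau+a$ covers the previous region together with a fresh region of measure $a$ disjoint from it; by the independence of PPP counts over disjoint regions, the number of nodes appearing there is an independent $\mathrm{Poisson}(\lambda a)$ variable, so $\xi_{\tau+1}=\xi_\tau+\mathrm{Poisson}(\lambda a)$ in distribution and the conditional law depends on the past only through $(\xi_\tau,a)$. For $a=0$, the range is unchanged and $s_{\tau+1}=s_\tau$ deterministically. For $a<0$, I would invoke the complementary property that, conditioned on $\xi_\tau$ points lying in the region of radius $\bar a_\tau$, those points are i.i.d.\ uniform over that region, so each survives into the shrunken region of radius $\bar a_\tau-|a|$ independently with probability $1-|a|/\bar a_\tau$; this yields the binomial thinning of Theorem~\ref{thm:STP}, again a function of $(\xi_\tau,\bar a_\tau,a)$ only. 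Collecting the three cases gives precisely $P(s'|s_\tau,a_\tau)$, which is what the Markov property requires; the utility $U(s,a,s')$ manifestly depends only on the current transition and the discount factor $\rho$ is a constant, so neither affects the property.

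The step I expect to be the main obstacle is the $a<0$ case, where two points need to be argued carefully: (i) that the conditional-uniformity property of the PPP genuinely licenses treating the shrinkage as independent Bernoulli thinning, and (ii) that the presence of $\bar a_\tau$ in the transition probability does not break the Markov property — since $\bar a_\tau$ is not an extra source of randomness but a quantity recoverable from the initialization and the realized action sequence, and one may, if a fully self-contained MDP is desired, simply note that augmenting the state with $\bar a_\tau$ (a deterministic function of the trajectory) leaves the transition kernel unchanged. A secondary subtlety worth a sentence is the rounding $\xi=\lceil s/(1-\beta)\rceil$: because $s\mapsto\xi$ is a fixed deterministic injection on the relevant range, passing between the $s$-chain and the $\xi$-chain is harmless and the Markov property transfers back to the $s$-chain verbatim.
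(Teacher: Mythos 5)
Your proposal is correct and rests on the same underlying fact as the paper's proof—the spatial independence of the homogeneous PPP plus the observation that $\bar a_\tau=\bar a_0+\sum_{k<\tau}a_k$ is a deterministic function of the realized actions—but the mechanics differ. The paper argues by induction on the history length, writing the conditional probability given the whole history as a PPP count over the cumulative radius $\bar a_0+\sum_{t=1}^{\tau}a_t$ conditioned on the count in $\bar a_0+\sum_{t=1}^{\tau-1}a_t$, and telescoping so that only the increment $(s_{\tau+1}-s_\tau;\,a_\tau)$ survives; as written, its displayed expressions are Poisson-increment formulas, i.e.\ they only make the range-expansion case explicit. You instead verify the one-step law directly with a case split on the sign of $a$, invoking independence over disjoint regions for $a>0$ and conditional uniformity with Bernoulli thinning for $a<0$, which matches the binomial row of \eqref{eqn:stp_proof_re} in Theorem~\ref{thm:STP} and so covers a case the paper's induction leaves implicit. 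You also flag a genuine subtlety the paper glosses over: for $a<0$ the kernel depends on $\bar a_\tau$, which is not part of the state $s$, so the process is Markov only because $\bar a_\tau$ is recoverable from the initial radius and the action history; your remark that augmenting the state with $\bar a_\tau$ (or conditioning on it as a deterministic functional of the trajectory) leaves the kernel unchanged is the clean way to close that gap, and the rounding remark about $\xi=\lceil s/(1-\beta)\rceil$ is likewise consistent with how the paper passes between $s$ and $\xi$. In short, your route is a more explicit, case-complete version of the same argument; the paper's telescoping induction is more compact but buys that brevity by leaving the $a\le 0$ cases and the $\bar a_\tau$ dependence unaddressed.
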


\begin{proof}
Let  $\left<s_{1},a_{1} \right>,  \left<s_{2},a_{2} \right>, \ldots,
\left<s_{\tau},a_{\tau} \right> $ be the sequence of events, where
$\left<s_{\tau},a_{\tau} \right>$ is an event which includes an action at
time $\tau$ (i.e., $a_{\tau}$) and a corresponding resulting state 
(i.e., $s_{\tau}$)\footnote{In this proof, we add time stamps $\tau$
on the notation of states and actions, e.g.,
$s_{ \tau}$ and $a_{\tau}$,  to clearly specify the time that an action is taken.}.  
The initial transmission range and corresponding state are denoted by $\bar a_0$ and $s_1$, respectively. 
To show that the tuple $\left< \mathbf S, \mathbf A, P(s'|s,a), U(s,a,s'),
\rho  \right>$ satisfies the Markov property, 
our aim is to prove 
%
\begin{align*}
P \left( s_{\tau+1}| \left<s_{\tau},a_{\tau} \right>, \ldots,  \left<s_{1},a_{1} \right> \right) 
= P(s_{\tau+1}|s_{\tau},a_{\tau}).
\end{align*}

The state transition probability that 
action $a_{1}$  
leads a node in state $s_{1}$ to a new state
$s_{2}$ can be expressed as 
\begin{equation*}
P \left( s_{2}|  s_1, a_{1} \right)
= \Pr\{s_1+(s_2-s_1) ; \bar a_0 +a_1 | s_1; \bar a_0\}
\end{equation*}
which implies that $(s_2-s_1)$ nodes are additionally included 
in the transmission range expanded by $a_1$.
Similarly, 
$P \left( s_{3}|  \left<s_{2},a_{2} \right>,  \left<s_{1},a_{1} \right> \right)$ can be expressed as

\begin{align*}
P ( s_{3}| & \left<s_{2},a_{2} \right>,  \left<s_{1},a_{1} \right> )\\
=  \Pr\{& s_1+(s_2-s_1)+(s_3-s_2) ; \bar a_0 +a_1 +a_2 \\ 
&|  s_1+(s_2-s_1); \bar a_0 +a_1  \}\\
=  \Pr\{&s_1+\sum_{t=1}^2(s_{t+1}-s_{t}) ;\bar a_0 +\sum_{t=1}^2 a_{t}\\
&| s_1+\sum_{t=1}^1(s_{t+1}-s_{t}) ; \bar a_0 +\sum_{t=1}^1 a_{t}\}\\
=\Pr\{&s_3-s_2; a_2\}= P(s_3| s_2, a_2 ).
\end{align*}
By induction, $P \left( s_{\tau+1}| \left<s_{\tau},a_{\tau} \right>,
\ldots,  \left<s_{1},a_{1} \right> \right)$ can be expressed as
\begin{align}
P ( s_{\tau+1}&| \left<s_{\tau},a_{\tau} \right>, \ldots,  \left<s_{1},a_{1} \right>  ) \notag \\
=  \Pr\{&s_1+\sum_{t=1}^\tau(s_{t+1}-s_{t}) ; \lambda(\bar a_0 +\sum_{t=1}^\tau a_{t})\notag\\
&| s_1+\sum_{t=1}^{\tau-1}(s_{t+1}-s_{t}) ; \lambda(\bar a_0 +\sum_{t=1}^{\tau-1} a_{t})\} \label{eqn:markov1}\\
= \Pr \{&s_{\tau+1}-s_{\tau}; \lambda \cdot a_{\tau} \}\label{eqn:markov2}\\
=P(s&_{\tau+1}|s_{\tau},a_{\tau})\notag. 
\end{align}
The equality between \eqref{eqn:markov1} and \eqref{eqn:markov2} can be derived 
by 
\begin{align*}
&s_{t+1}-s_{t} = \sum_{t=1}^{\tau}(s_{t+1}-s_{t}) -  \sum_{t=1}^{\tau-1}(s_{t+1}-s_{t})
\end{align*} 
and 
\begin{align*}
&\lambda\cdot a_{\tau} = \lambda (\bar a_0 + \sum_{t=1}^{\tau}a_{t} ) - \lambda (\bar a_0 + \sum_{t=1}^{\tau-1} a_{t}  )
\end{align*} 
which completes the proof. 
\end{proof}

Theorem~\ref{th:markov} shows that the proposed framework 
in this section can be modeled by the MDP. 
Fig.~\ref{fig:state} shows an illustration of the proposed framework. 
\begin{figure}[t]
\centering
\begin{center}
\includegraphics[width = 8.5cm]{./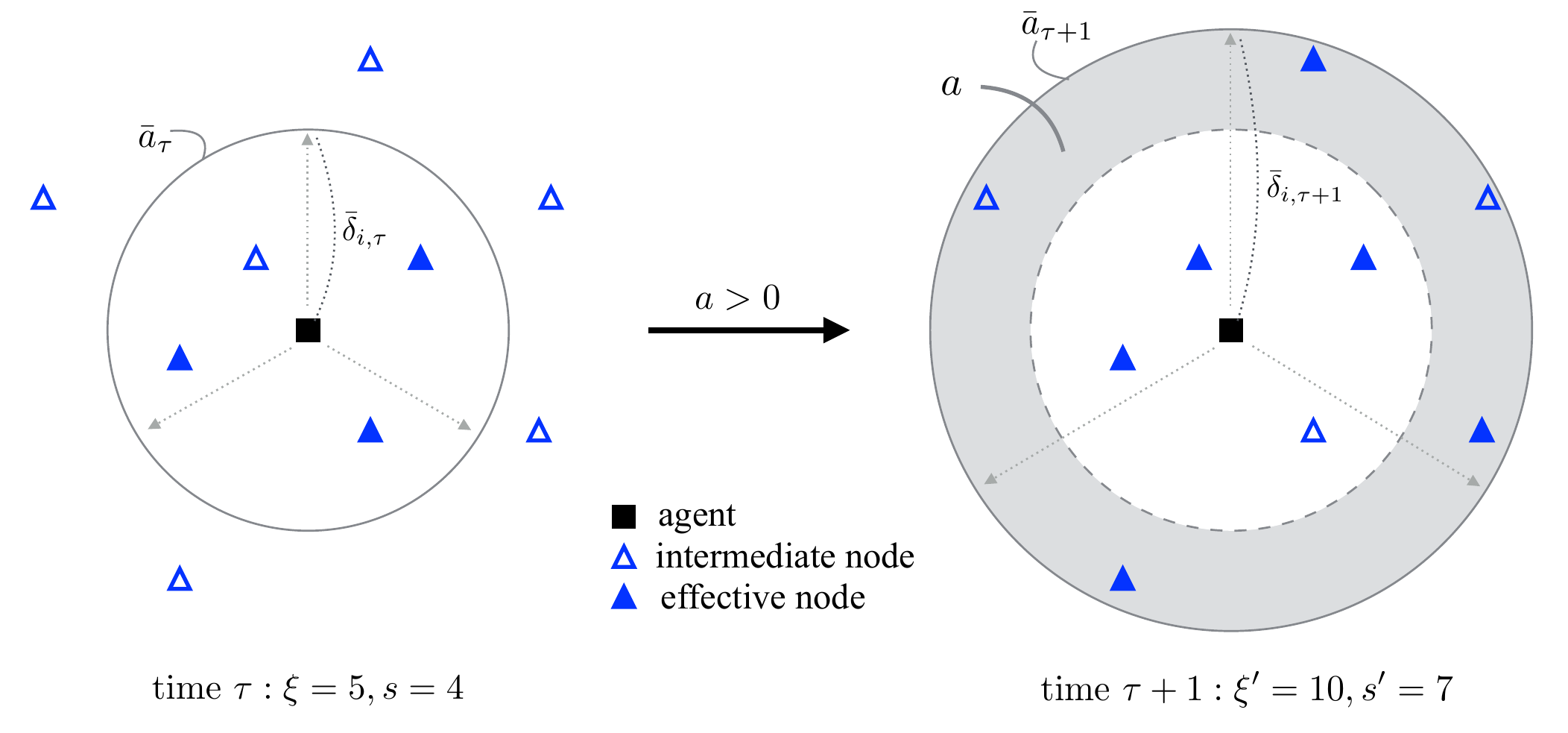}
\caption{An illustrative explanation of the proposed framework. 
}
\label{fig:state}
\end{center}
\vspace{-1cm}
\end{figure}

In the next section, we show how the strategy enables each node to 
make its own optimal decisions. 

\section{Distributed Network Formation Strategy}
\label{sec:distributed}


\subsection{MDP-Based Optimal Strategy for Network Formation}
\label{subsec:value_it}
 

 
The solution to an MDP is the optimal policy that maps the optimal actions performed in a particular state. 
Specifically, the policy is a function $\pi: \mathbf S \rightarrow \mathbf A$ which
returns an action for a state, i.e,  $\pi(s)= a$. 
The policy $\pi$ is optimal if it can maximize the 
\emph{state-value function} $V_{\tau}(s)$. 

The state-value function $V_{\tau}(s)$ represents a cumulative utility at time $\tau$, 
starting from state $s$, expressed as
\begin{align}
V_{\tau+1}(s) 
&= U(s,a,s') + \rho U(s',a, s'' ) + \rho^2 U(s'' ,a, s''') + \cdots  \notag\\
&=  U(s,a,s') + \rho V_{\tau}(s')   \label{eqn:return}
\end{align} 
where state $s$ sequentially moves into $s'$, $s''$, and $s'''$.  
In \eqref{eqn:return}, 
$V_{\tau}(s)$ 
includes the \emph{immediate utility} $U(s,a,s')$ and the discounted state-value of
successive  
states $\rho V_{\tau}(s')$. The expected value for the state-value function is thus
expressed as
\begin{align}
\mathbb E(V_{\tau+1}(s) )
=  \sum_{s'\in \mathbf S} P(s'|s,a)  \left(  U(s,a,s') + \rho   V_{\tau}(s') \right).
\label{eqn:E_V}
\end{align}
The \emph{optimal state-value function} $V^*(s)$ is the maximum state-value 
function over
all policies, i.e, 
\begin{equation}
V^*(s) = \max_{\pi} \mathbb E(V^{\pi}(s) )
\label{eqn:opt_value}
\end{equation} 
where $V^{\pi}(s)$ is the state-value achieved by the actions determined 
by policy $\pi$ at every state. 
Finally, the optimal policy $\pi^*$ is the policy that leads to $V^*(s)$, and it is defined as    
\begin{align}
\pi^*  &= \arg\max_{\pi}\mathbb E( V^{\pi}(s)), \forall s \in \mathbf S \label{eqn:value}
\end{align} 
This is also known as the Bellman optimality equation~\cite{bellman2015}. 
Given the optimal policy $\pi^*$, 
optimal action $a^*$ for each state $s$ can be determined such that 
\begin{align}
a^* &= \pi^*(s)\notag\\
&= \arg\max_{a \in \mathbf A} \sum_{s' \in \mathbf S}P(s'|s,a)\left( U(s,a,s') + \rho    V^*(s') \right). \notag
\end{align}

In practice, a near-optimal policy is widely used as it requires  
lower computational complexity. 
$\pi_{\epsilon}^*$ is an $\epsilon$-optimal policy if 
\begin{equation*}
  || V^{\pi_{\epsilon}^*}(s) - V^*(s) ||_{\infty} \le \epsilon 
\end{equation*}
which means that the error between $V^{\pi_\epsilon^*}(s)$, the state-value derived
by $\pi_\epsilon^*$, 
and $V^*(s)$
is bounded by the optimality level $\epsilon$.
The $\epsilon$-optimal policy can be found using 
Algorithm~\ref{alg:value_it}. 

\floatname{algorithm}{Algorithm}
\algsetup{indent= 1em}
\begin{algorithm}[t]
        \caption{Algorithm for $\epsilon$-Optimal Policy }
        \label{alg:value_it}
\begin{algorithmic}[1]
\smallskip
\REQUIRE{ 
state space $\mathbf S$, 
action space $\mathbf A$, 
utility function $U(s,a,s')$, 
weight $\omega$, 
discount factor $\rho$, 
state transition probability $P(s'|s,a)$, 
optimality level $\epsilon$} 
\STATE \textbf{Initialize:} $V_0(s) \leftarrow 0, \forall s \in \mathbf S$, $\tau \leftarrow 0$
\WHILE { $V_{\tau}(s)-V_{\tau-1}(s)> \frac{1-\rho}{2\rho} \epsilon$ for any $s\in \mathbf S$}
\FOR {$\forall s \in \mathbf S$  }
\STATE {update $V_{\tau+1}(s) \leftarrow \max_{a \in \mathbf A}$ $\left( \sum_{s'\in \mathbf S} P(s'|s,a) \right.$\\
\quad \quad \quad \quad \quad \quad \quad  \quad $ \left. \times (  U(s,a,s') + \rho   V_{\tau}(s') )\right)$}
	\ENDFOR
	\STATE{ $\tau \leftarrow \tau+1$ }
\ENDWHILE
\STATE {choose $\epsilon$-optimal policy $\pi^{\epsilon^*}(s) \leftarrow  \arg\max_{a} V_{\tau}(s), \forall s\in \mathbf S$}
\RETURN{  $\epsilon$-optimal policies $ \pi^{\epsilon^*}$} 
\end{algorithmic}
\end{algorithm}





%


Using stopping criteria, we show that  
Algorithm~\ref{alg:value_it} converges to $\epsilon$-optimal policies
in Theorem~\ref{thm:stop_cond}.

\begin{theorem}
  \label{thm:stop_cond}
The $\epsilon$-optimal policy can be achieved by 
Algorithm~\ref{alg:value_it} 
if the iteration stops with condition 
\begin{displaymath}
 || V_{\tau}(s) - V_{\tau-1}(s)||_{\infty} \le \frac{1-\rho}{2\rho} \epsilon 
\end{displaymath}
for all $s \in \mathbf S$. 
 \label{th:stop}
\end{theorem}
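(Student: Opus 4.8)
The plan is to recognize Algorithm~\ref{alg:value_it} as value iteration and to run the standard contraction-mapping analysis of the Bellman optimality operator, paying attention to the factor of two that appears in the stopping threshold. I would first set up two operators on bounded functions $V:\mathbf S\to\mathbb R$: the Bellman optimality operator $(TV)(s)=\max_{a\in\mathbf A}\sum_{s'\in\mathbf S}P(s'|s,a)\big(U(s,a,s')+\rho V(s')\big)$, and, for a fixed policy $\pi$, the per-policy operator $(T^{\pi}V)(s)=\sum_{s'\in\mathbf S}P(s'|s,\pi(s))\big(U(s,\pi(s),s')+\rho V(s')\big)$. The preliminary facts to establish are: (i) since $\sum_{s'}P(s'|s,a)=1$ for all $s,a$ and $\rho<1$, both $T$ and $T^{\pi}$ are $\rho$-contractions in the $\|\cdot\|_{\infty}$ norm; (ii) by the Banach fixed-point theorem each has a unique fixed point, namely $V^{*}$ for $T$ (the optimal state-value function of \eqref{eqn:opt_value}) and $V^{\pi}$ for $T^{\pi}$; and (iii) the inner update of Algorithm~\ref{alg:value_it} is exactly $V_{\tau+1}=TV_{\tau}$.

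Next I would convert the stopping condition into a bound on $\|V_{\tau}-V^{*}\|_{\infty}$. Writing $V_{\tau}=TV_{\tau-1}$ and $V^{*}=TV^{*}$ and combining the triangle inequality with the contraction property gives $\|V_{\tau}-V^{*}\|_{\infty}\le\rho\|V_{\tau-1}-V^{*}\|_{\infty}\le\rho\|V_{\tau-1}-V_{\tau}\|_{\infty}+\rho\|V_{\tau}-V^{*}\|_{\infty}$, hence $\|V_{\tau}-V^{*}\|_{\infty}\le\tfrac{\rho}{1-\rho}\|V_{\tau}-V_{\tau-1}\|_{\infty}$. When the loop terminates we have $\|V_{\tau}-V_{\tau-1}\|_{\infty}\le\tfrac{1-\rho}{2\rho}\epsilon$, so $\|V_{\tau}-V^{*}\|_{\infty}\le\epsilon/2$.

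The remaining, slightly more delicate, step is to bound the value of the \emph{greedy policy} returned by the algorithm rather than the iterate $V_{\tau}$ itself. Since $\pi^{\epsilon^{*}}$ is greedy with respect to $V_{\tau}$, we have $T^{\pi^{\epsilon^{*}}}V_{\tau}=TV_{\tau}$. Using $V^{\pi^{\epsilon^{*}}}=T^{\pi^{\epsilon^{*}}}V^{\pi^{\epsilon^{*}}}$, the contraction of $T^{\pi^{\epsilon^{*}}}$, and $\|TV_{\tau}-V_{\tau}\|_{\infty}=\|TV_{\tau}-TV_{\tau-1}\|_{\infty}\le\rho\|V_{\tau}-V_{\tau-1}\|_{\infty}$, the same fixed-point manipulation yields $\|V^{\pi^{\epsilon^{*}}}-V_{\tau}\|_{\infty}\le\tfrac{\rho}{1-\rho}\|V_{\tau}-V_{\tau-1}\|_{\infty}\le\epsilon/2$. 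A final triangle inequality, $\|V^{\pi^{\epsilon^{*}}}-V^{*}\|_{\infty}\le\|V^{\pi^{\epsilon^{*}}}-V_{\tau}\|_{\infty}+\|V_{\tau}-V^{*}\|_{\infty}\le\epsilon/2+\epsilon/2=\epsilon$, shows that $\pi^{\epsilon^{*}}$ is $\epsilon$-optimal. I expect this last part to be the main obstacle: it is tempting to stop after bounding $\|V_{\tau}-V^{*}\|_{\infty}$, but that does not by itself bound $\|V^{\pi^{\epsilon^{*}}}-V^{*}\|_{\infty}$; one genuinely needs the additional contraction argument on $T^{\pi^{\epsilon^{*}}}$, and this is exactly where the $\tfrac12$ in the threshold $\tfrac{1-\rho}{2\rho}\epsilon$ originates (each of the two halves of $\epsilon$ comes from one application of the $\tfrac{\rho}{1-\rho}$ bound). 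Everything else is routine once the $\rho$-contraction of $T$ is in place, which rests only on $\sum_{s'}P(s'|s,a)=1$ and $\rho<1$.
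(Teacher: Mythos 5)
Your proposal is correct and follows essentially the same route as the paper's proof in Appendix B: split $\|V^{\epsilon^*}-V^*\|_\infty$ by the triangle inequality into $\|V^{\epsilon^*}-V_\tau\|_\infty+\|V_\tau-V^*\|_\infty$, bound each term by $\tfrac{\rho}{1-\rho}\|V_\tau-V_{\tau-1}\|_\infty$ using the $\rho$-contraction property, the fixed-point identities, and the greedy-policy fact $\mathcal T^{\epsilon^*}V_\tau=\mathcal T^*V_\tau$, which is exactly where the paper's two halves of $\epsilon$ come from. Your explicit emphasis on the per-policy operator and Banach fixed points is just a more formal phrasing of the same argument.
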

\begin{proof}
See Appendix~\ref{app:th:stop}.
\end{proof}

It can also be shown in Theorem~\ref{thm:conv} that Algorithm~\ref{alg:value_it} converges to the optimal policy $\pi^*$ by setting $\epsilon=0$. 
\begin{theorem} 
  \label{thm:conv}
  The optimal policy $\pi^*$ can always be achieved  by setting $\epsilon = 0$ in 
  Algorithm~\ref{alg:value_it}, i.e., 
  \begin{equation}
\lim_{\tau \rightarrow \infty}  V_{\tau}(s)  = V^*(s) 
\end{equation}
for all $s \in \mathbf S$.
\label{th:value_convergence}
\end{theorem}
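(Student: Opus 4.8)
The plan is to recognize the update in Algorithm~\ref{alg:value_it} as a fixed-point iteration for the Bellman optimality operator and to invoke the Banach contraction principle. First I would introduce the operator $T$ acting on bounded functions $V:\mathbf S \to \mathbb R$ by
\begin{equation*}
(TV)(s) = \max_{a \in \mathbf A} \sum_{s' \in \mathbf S} P(s'|s,a)\bigl( U(s,a,s') + \rho V(s') \bigr),
\end{equation*}
so that the per-iteration update of Algorithm~\ref{alg:value_it} is precisely $V_{\tau+1} = T V_{\tau}$ starting from $V_0 \equiv 0$. Because the state space is finite (the state-space definition gives $1 \le s \le \lceil (N_V+N_T)/(1-\beta)\rceil$) and $U(s,a,s')$ is bounded, the space of such functions with the norm $\|\cdot\|_\infty$ is a complete metric space; hence it suffices to prove that $T$ is a contraction and that its fixed point coincides with $V^*$.

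Second, I would verify the contraction property. For any bounded $V,W$ and any $s$, using $|\max_a f(a) - \max_a g(a)| \le \max_a |f(a) - g(a)|$ together with the cancellation of the $U$-terms and the fact that each $P(\cdot|s,a)$ is a probability distribution (Theorem~\ref{thm:STP}),
\begin{align*}
|(TV)(s) - (TW)(s)|
&\le \rho \max_{a \in \mathbf A} \sum_{s' \in \mathbf S} P(s'|s,a)\,\bigl| V(s') - W(s') \bigr| \\
&\le \rho \|V - W\|_\infty .
\end{align*}
Taking the supremum over $s \in \mathbf S$ yields $\|TV - TW\|_\infty \le \rho\|V - W\|_\infty$, and $\rho < 1$ makes $T$ a contraction.

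Third, by the Banach fixed-point theorem $T$ admits a unique fixed point $V_\infty$, and the iterates satisfy $\|V_\tau - V_\infty\|_\infty \le \rho^{\tau}\|V_0 - V_\infty\|_\infty \to 0$, so $V_\tau \to V_\infty$. The remaining step is to identify $V_\infty$ with $V^*$: any $V$ with $TV = V$ satisfies the Bellman optimality equation, and by \eqref{eqn:opt_value}--\eqref{eqn:value} this characterizes the optimal state-value function, so $V_\infty = V^*$. Therefore $\lim_{\tau\to\infty} V_\tau(s) = V^*(s)$ for every $s$, which is exactly the statement for $\epsilon = 0$; alternatively, this also follows from Theorem~\ref{thm:stop_cond} by letting $\epsilon \downarrow 0$ in the bound $\|V^{\pi^{\epsilon^*}} - V^*\|_\infty \le \epsilon$.

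The contraction estimate and the completeness claim are routine; the one place that deserves care is the last step, namely showing that \emph{every} fixed point of $T$ equals $V^*$ rather than merely being some internally consistent value function. I would settle this by the standard argument that the stationary policy acting greedily with respect to a fixed point of $T$ attains that value, while no policy can exceed it, so the fixed point equals $\max_{\pi}\mathbb E(V^{\pi}(s))$.
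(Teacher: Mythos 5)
Your proposal is correct and follows essentially the same route as the paper: both arguments rest on the $\rho$-contraction of the Bellman optimality operator in the sup-norm and the resulting geometric bound $\|V_\tau - V^*\|_\infty \le \rho^{\tau}\|V_0 - V^*\|_\infty \to 0$ (the paper derives the contraction from monotonicity and additivity of $\mathcal T^*$ in its appendix, whereas you estimate it directly, and it packages the fixed-point reasoning by asserting $V^* = \mathcal T^* V^*$ from the definition rather than invoking Banach). Your additional care in identifying the unique fixed point with $V^*$ via the greedy-policy argument is a refinement of a step the paper takes for granted, not a different method.
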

\begin{proof}
See Appendix~\ref{app:th:value_convergence}.
\end{proof}
The proof of Theorem~\ref{th:value_convergence} is shown in \eqref{eqn:th_proof3}
in Appendix~\ref{app:th:value_convergence}, 
Based on this proof, we  concluded that 
\begin{align}
&\lim_{\tau \rightarrow \infty}||V_{\tau}(s) - V^*(s)||_{\infty}\le \lim_{\tau \rightarrow \infty} \rho^{\tau} ||V_0(s) - V^*(s)||_{\infty}\notag
\end{align}
where $\| \|_\infty$ denotes the infinite norm. This shows that 
the convergence speed of Algorithm~\ref{alg:value_it} 
significantly depends on the discount factor $\rho$. 
Hence, the convergence speed can be controlled by discount factor $\rho$.

Using the optimal policy, a node now can adaptively change its transmission range
against network dynamics, which leads to a robust network. 
It is worth noting that the complexity to find the optimal policy at each node does not change,  even if the total number of nodes in network increases. 
Hence, as the number of nodes increases, 
the total complexity to find the optimal policies of all nodes in the network increases linearly.   
This is because the proposed MDP framework of each node is not affected by individual network member nodes, instead it is only affected by node density $\lambda$ in network.  
In the next section, we
study how to determine the initial state for each node which determines the initial
transmission range.

\subsection{Stationary Network with Optimal Policy}
\label{subsec:equ}

Each node can 
periodically change its transmission range according to the optimal policy obtained from Algorithm~\ref{alg:value_it}. 
The resulting network can be in stationary, i.e., 
the number of network nodes is 
unchanged if each node takes action based on the optimal policy. 
In this section, we discuss how to initial conditions are determined such that  
the convergence speed for the optimal policy can be expedited in practice. 

With the optimal policy $\pi^*$, the proposed MDP framework is reduced to the Markov
chain with  a 
state transition matrix $\mathbf P $ whose element at $(s, s')$ is denoted
by $\mathbf P (s, s')$, which is expressed as 
\begin{align}
\mathbf P (s, s') 
&=P(s'|s,\pi^*(s))
\label{eqn:STM}
\end{align}
This is the state transition probability $P(s'|s,a)$ in~\eqref{eqn:stp_proof_re}
with the optimal action $a = \pi^*(s)$. 
The state transition probability $\mathbf P(s,s')$ provides the probability that a
single state transition changes  a node in $s$ to $s'$. 
Then, the 
limiting matrix $\lim_{n \rightarrow
\infty} \mathbf P^n$ and the limiting  distribution $\boldsymbol{\sigma}=
[\sigma_1,\dots,\sigma_s,
\ldots, \sigma_{|\mathbf S|}]$ 
which can be obtained  as 
\begin{align}
\sigma_s = \lim_{n \rightarrow \infty} \frac{\sum_{s' \in \mathbf S} \mathbf P^n (s',s)}{\sum_{s \in \mathbf S} \sum_{s' \in \mathbf S} \mathbf P^n (s',s)}
\label{eqn:sigma_s}
\end{align}
where $\sigma_s$ denotes the probability of being in
state $s$ after an infinite number of state transitions. 
Finally, the initial state $s^\dagger$ can be determined by choosing the state with
the
highest limiting distribution, i.e., 
\begin{align}
 s^\dagger 
 &=\arg_{s\in \mathbf S}\max \sigma_s 
%
\label{eqn:ini_first}
 \end{align}
which allows the initial network to be formed close to the stationary network with
the highest probability.  

\subsubsection{The optimal action includes no change of transmission range}
If the optimal action at a state $s^*$ is not to change its transmission range, 
i.e., $a^*=\pi^*(s^*)=0$, the $s^*$th row of $\mathbf P$ can be expressed  
by the definition of $P(s'|s,a)$ in \eqref{eqn:stp_proof_re} as
\begin{align}
\mathbf P (s^{*}, s') =
\begin{cases}
1, &  s' =s^{*}\\
0, & s' \in \{ \mathbf S \backslash s^{*} \}
\end{cases}
\label{eqn:absorbing}
\end{align}
where $ \mathbf S \backslash s^{*}$ denotes the set of elements in $\mathbf S $
excluding $s^{*}$. 
This allows the state transition matrix $\mathbf P$ to be formulated in canonical form as
\begin{align}
\mathbf P = \left(
\begin{array}{c c}
\mathbf Q &\mathbf R\\
\mathbf 0 &\mathbf I 
\end{array} \right)
\label{eqn:canonical_P_abs}
\end{align}
where $\mathbf Q\ge0$ is a nonnegative matrix, $\mathbf R>0$ is a strictly positive
matrix, $\mathbf 0$ denotes the matrix with zeros and $\mathbf I$ denotes the identity
matrix. 
The size of matrix $\mathbf I$ becomes the number of  states whose actions are zero.

Then, the limiting matrix of $\mathbf P$ in \eqref{eqn:canonical_P_abs} becomes
\begin{align}
\lim_{n \rightarrow \infty} 
\mathbf P^n &= 
\lim_{n \rightarrow \infty} 
 \left(
\begin{array}{c c}
\mathbf Q^n & \mathbf Q^{n-1} \mathbf {R}+\cdots +\mathbf Q \mathbf {R}+ \mathbf {R}\\
\mathbf 0 &\mathbf I 
\end{array} \right) \notag\\
&= 
 \left(
\begin{array}{c c}
\mathbf 0 & \mathbf {FR}\\
\mathbf 0 &\mathbf I 
\end{array} \right) 
\label{eqn:lim_P_2}
\end{align}
where $\mathbf F = (\mathbf I - \mathbf Q)^{-1}$ is the fundamental matrix of
$\mathbf Q$. $\lim_{n \rightarrow \infty} \mathbf Q^n = \mathbf 0$ as the
element of $\mathbf Q$ is in $[0,1)$. 
Then, an element $\sigma_s$ in the limiting distribution $\boldsymbol \sigma$ can be obtained based on \eqref{eqn:sigma_s}
\begin{align*}
\sigma_s = \frac{ \sum_{\forall i } {(\mathbf F \mathbf R)}_{ij}}{\zeta}
 \end{align*}
where ${(\mathbf F \mathbf R)}_{ij}$ denotes the $(i,j)$th element of the matrix $\mathbf{FR}$ and $\zeta = \sum_{\forall j }  \left( \sum_{\forall i } {(\mathbf F \mathbf R)}_{ij}+1\right)$ is a constant.  
Therefore, the initial state $s^\dagger$ can be determined as
\begin{align}
 s^\dagger &=\arg_{s\in \mathbf S}\max \sigma_s \notag\\
 &=\arg_{j}\max \sum_{\forall i} {(\mathbf F \mathbf R)}_{ij}. 
 \label{eqn:inistate_1}
\end{align}

In the case where the set of optimal actions does not include no change in transmission
range, 
the state transition matrix
$\mathbf P$ cannot be formulated as shown in \eqref{eqn:canonical_P_abs}.  
This is discussed next. 

\subsubsection{Optimal action includes  change of transmission range}
\label{subsubsec:opt_act}

Since $a^*=\pi^*(s^*)=0$ is not available as an action, the optimal actions are
either to enlarge or reduce the transmission range. 

If the optimal action at a state $s$ is to enlarge the transmission range (i.e., $a^*=\pi^*(s^*)>0$),
the $s$th row of $\mathbf P$ becomes  
\begin{align}
\mathbf P (s, s') =
\begin{cases}
0, &  s' < s\\
\frac{(\lambda a)^{\xi'-\xi} \cdot e^{-\lambda a}}{(\xi'-\xi)!} , & s' \ge s 
\end{cases}. 
\label{eqn:positive_a}
\end{align}
Similarly, 
if the optimal action at a state $s$ is to reduce the transmission range (i.e.,
$a^*=\pi^*(s^*)<0$), the $s$th row of $\mathbf P$ becomes  
\begin{align}
\mathbf P (s, s') =
\begin{cases}
{\xi\choose{\xi'} }{(1- \frac{|a|}{\bar a_{ \tau}})}^{\xi'}{( \frac{|a|}{\bar a_{ \tau}})}^{\xi- \xi'}, &  s' \le s\\
0, & s' > s 
\end{cases}. 
\label{eqn:negative_a}
\end{align}
The state transition matrix $\mathbf P $ can be correspondingly expressed as a canonical form of 
\begin{align}
\mathbf P = \left(
\begin{array}{c c}
\mathbf U &\mathbf Q_1\\
\mathbf Q_2 &\mathbf L 
\end{array} \right)
\label{eqn:canonical_P}
\end{align}
where $\mathbf U$ and $\mathbf L$  denote the upper and lower triangular matrices
and $\mathbf Q_1$ and $\mathbf Q_2$ are strictly positive matrices. 
Since the
optimal actions are determined by considering both rewards and costs, 
an optimal action can be determined to enlarge the current transmission range if a node
is in a state with too few nodes. 
On the other hand, if a node is in a state with too many nodes, the optimal policy
may determine the  optimal action that 
reduces the transmission range such that the cost can be reduced. 
Hence, the state transition matrix $\mathbf P$ in \eqref{eqn:canonical_P} 
consists of $(\mathbf U \quad \mathbf
Q_1)$ and  $(\mathbf Q_2 \quad \mathbf L)$.

Note that 
$\mathbf P^n$ for $n \ge 2$ is a strictly positive matrix. For example,
\begin{align*}
\mathbf P^2 &= \left(
\begin{array}{c c}
\mathbf U &\mathbf Q_1\\
\mathbf Q_2 &\mathbf L 
\end{array} \right)
\left(
\begin{array}{c c}
\mathbf U &\mathbf Q_1\\
\mathbf Q_2 &\mathbf L 
\end{array} \right)\\
&=\left(
\begin{array}{c c}
\mathbf U^2 +\mathbf Q_1\mathbf Q_2 & \mathbf U \mathbf Q_1 + \mathbf Q_2 \mathbf L\\
\mathbf Q_2 \mathbf U + \mathbf L \mathbf Q_2 &\mathbf Q_2\mathbf Q_1 +\mathbf L^2
\end{array} \right)
\end{align*}
which becomes a strictly positive matrix. 
Hence,  Perron-Frobenius theorem \cite{pillai2005} guarantees 
that 
there is a unique largest eigenvalue 
and the largest
eigenvalue is $1$ since $\mathbf P$ is a stochastic matrix. 
Therefore, the unique limiting distribution $\boldsymbol \sigma$ can be found as a 
row eigenvector of $\mathbf P$ associated with eigenvalue $1$, i.e., 
$\boldsymbol \sigma \mathbf P = \boldsymbol \sigma$, and the initial state becomes
the state with the largest limiting distribution as shown in \eqref{eqn:ini_first}.

The initialization phase of the proposed system can be expedited by  choosing the initial state of each node that leads to the initial network. 


\floatname{algorithm}{Algorithm}
\algsetup{indent= 1em}
\begin{algorithm}[tb]
        \caption{Algorithm for adaptation phase with packet transition }
        \label{alg:adapt}
\begin{algorithmic}[1]
\smallskip
\REQUIRE{ 
optimal policy $\pi^*$}
\STATE \textbf{Initialize:} $s \leftarrow s^\dagger$ \texttt{// build a stationary network}
\WHILE {network is active}
\STATE {\texttt{// receive and combine packets} \\
store received packets in buffer}
\IF{$\mathcal L_i \ne \emptyset$}
\STATE {{\hspace{-1em}}\texttt{//if the buffer is not empty} \\ 
build a network coded packet based on \eqref{eqn:NC1}}
\STATE {{\texttt{// update network topology}\\
check the current state $s$}
\STATE find the optimal action: $a^* \leftarrow \pi^*(s)$}
\STATE {update the transmission range: $\bar a_{\tau+1} \leftarrow \bar a_{\tau}+ a$}
\STATE {broadcast the network coded packet}
\ENDIF
\STATE {$\tau \leftarrow \tau+1$}
\ENDWHILE
\end{algorithmic}
\end{algorithm}

%
%
%

%

\section{Simulation Results}
\label{sec:simulation}

In this simulation, we consider a wireless ad hoc network with multi-source multicast
flows where multiple intermediate nodes aim to relay source data to multiple terminal
nodes using network coding. 
All intermediate nodes are policy-compliant, meaning that 
each node builds its own
optimal policy and correspondingly changes its transmission range
based on
the number of nodes included in its transmission range. 
In this section, we present a network formation result based on the proposed
strategy, and then we show a performance comparison with other existing network
formation strategies in applications with Wi-Fi Direct. 
The proposed algorithms are designed and implemented by MATLAB and all the
simulations are performed on a Windows 7 system configured by a Core i7 3.40GHz CPU with 8GB of RAM.

\subsection{Numerical Results for the Proposed Strategy}

We consider a network with two source nodes, two terminal
nodes and multiple intermediate nodes. The number of intermediate nodes follows
the PPP with a node density of $4/5$. The network size denotes 
the size of the area in the network,  
and 
three different network sizes are considered. 
The results presented in this section are based on $1,000$ independent 
experiments with a randomly generated number of nodes in a network size. 

Fig.~\ref{fig:area_numnodes} shows the number of nodes in given
network areas that 
are determined by the PPP with a node
density  of $\lambda = 4/5$.
\begin{figure}[tb]
\vspace{-0.5cm}
\centering
\begin{center}
\includegraphics[width = 8.5cm]{./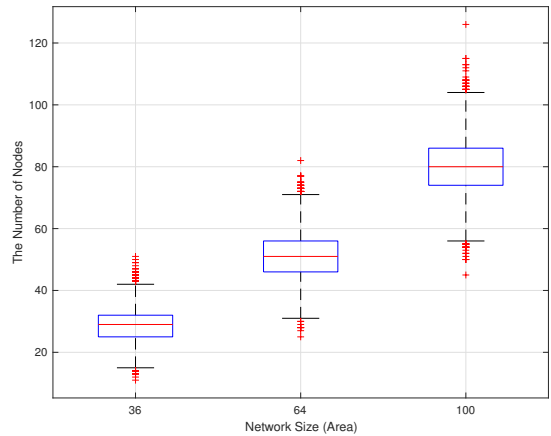}
\caption{ The number of appeared nodes for given network size with a node density of $\lambda = 4/5$.}
\label{fig:area_numnodes}
\end{center}
\vspace{-0.5cm}
\end{figure}
The line in the middle of each box in Fig.~\ref{fig:area_numnodes}
denotes the median of the experiments, which are $29$, $51$, and $80$ for 
network sizes of $36$, $64$, and
$100$, respectively.  The top and bottom of each box are the $25$th and $75$th
percentiles, respectively. Hence, it is confirmed that intermediate
nodes are well generated by the PPP based on the node density.

Each agent builds $\epsilon$-optimal policy based on Algorithm~\ref{alg:value_it}
with the parameter $\epsilon =
0.01$, twenty states and five actions, i.e., $|\mathbf S|
= 20$ and $|\mathbf A|=5$. 
\begin{figure}[tb]
\centering
\begin{center}
\includegraphics[width = 8.5cm]{./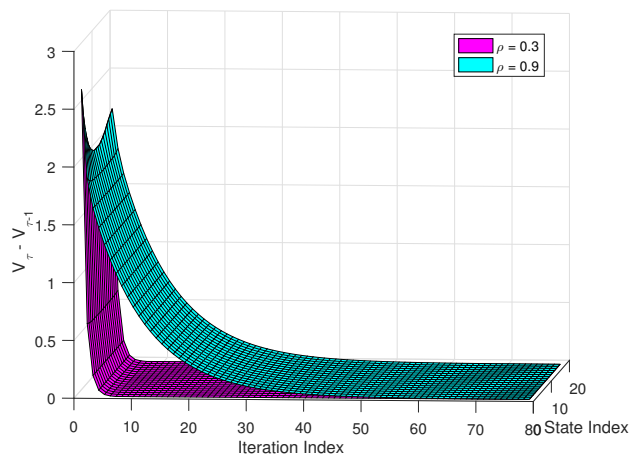}
\caption{  Convergence process in Algorithm~\ref{alg:value_it}}
\label{fig:3D_conv}
\end{center}
\vspace{-1cm}
\end{figure}
Fig.~\ref{fig:3D_conv} shows the values of 
$V_{\tau}(s) - V_{\tau-1}(s)$ for all $s \in \mathbf S$
over iterations and it is observed that 
$V_{\tau}(s) - V_{\tau-1}(s)$ approaches 0 as the number of iterations increases. 
Specifically, 
the iteration is terminated if 
$ V_{\tau}(s) - V_{\tau-1}(s) \le \frac{1-\rho}{2\rho} \epsilon$,  
and thus eventually  
$ V_{\tau}(s) = V_{\tau-1}(s) $ with large enough iterations.
Therefore, we conclude that the proposed algorithm converges.  
The convergence speed is dependent on the 
discount factor $\rho$ as shown in Fig.~\ref{fig:rho_it}.  
For larger $\rho$, which takes into account longer future utilities, 
it takes a longer time (i.e., more iterations) to find $\epsilon$-policy.  
On the other hand, it takes less time to find the $\epsilon$-policy for a small $\rho$. 
\begin{figure}[t]
\centering
\begin{center}
\includegraphics[width = 8cm]{./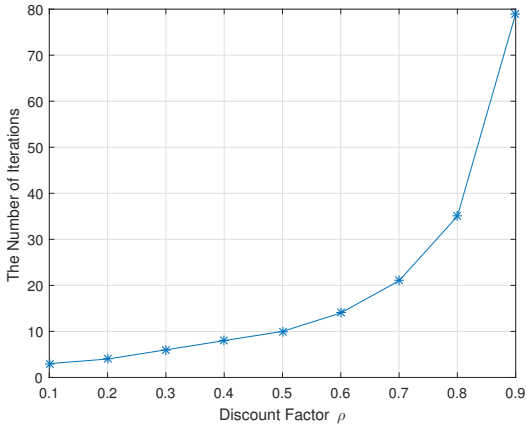}
\caption{ Convergence speed of Algorithm~\ref{alg:value_it} as a function of discount factor}
\label{fig:rho_it}
\end{center}
\vspace{-1.5cm}
\end{figure}

%
We next study the resulting network in terms of two connectivity measures: 
the number of constructed links 
and algebraic connectivity~\cite{gross2004}.
The number of links constructed in the network reflects 
the extrinsic connectivity and can be quantified by counting the number
of links in the network. In contrast,  the algebraic connectivity 
is the measure of intrinsic
connectivity, i.e., 
how well the overall network is constructed. 
%
Fig.~\ref{fig:zero_beta} shows the impact of weight $\omega$ in utility
function \eqref{eqn:U} on network connectivity. Since $\omega$ is the weight of
reward in the utility function, it is expected that the resulting networks are formed
such that the rewards (or the cost) are given more weight 
than the cost (or the rewards)
if $\omega$ is high (or low). 
In the simulations, we assume that there is no link failure in the
channels (i.e., $\beta = 0$) and the discount factor is $\rho = 0.5$. 
Fig.~\ref{fig:area_links} shows that the number of links is proportional to
both network size and $\omega$. 
\begin{figure*}[tb]
    \centering
    \begin{subfigure}[b]{0.4\textwidth}
        \includegraphics[width=\textwidth]{./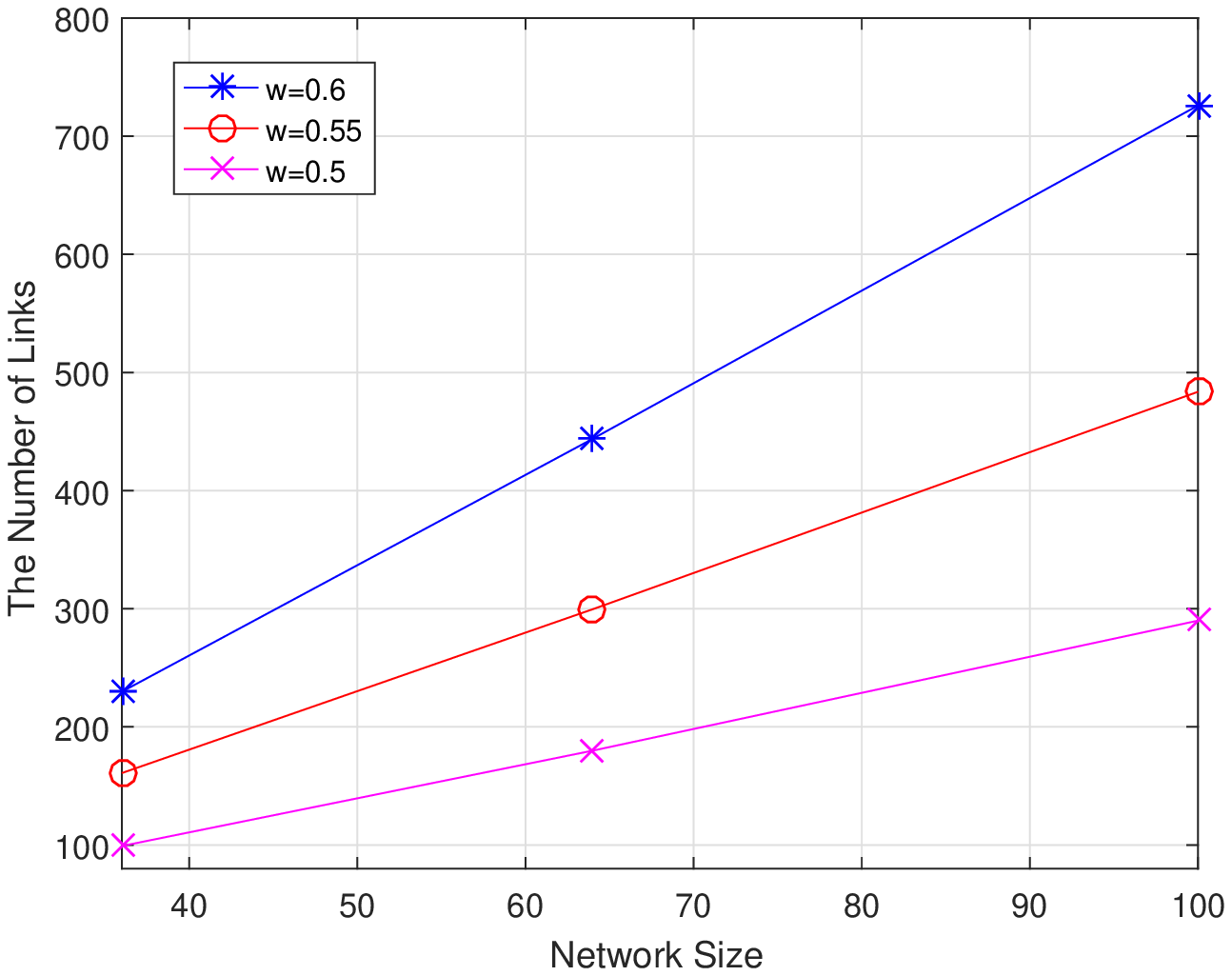}
        \caption{}
        \label{fig:area_links}
    \end{subfigure}
    ~ 
    \begin{subfigure}[b]{0.4\textwidth}
        \includegraphics[width=\textwidth]{./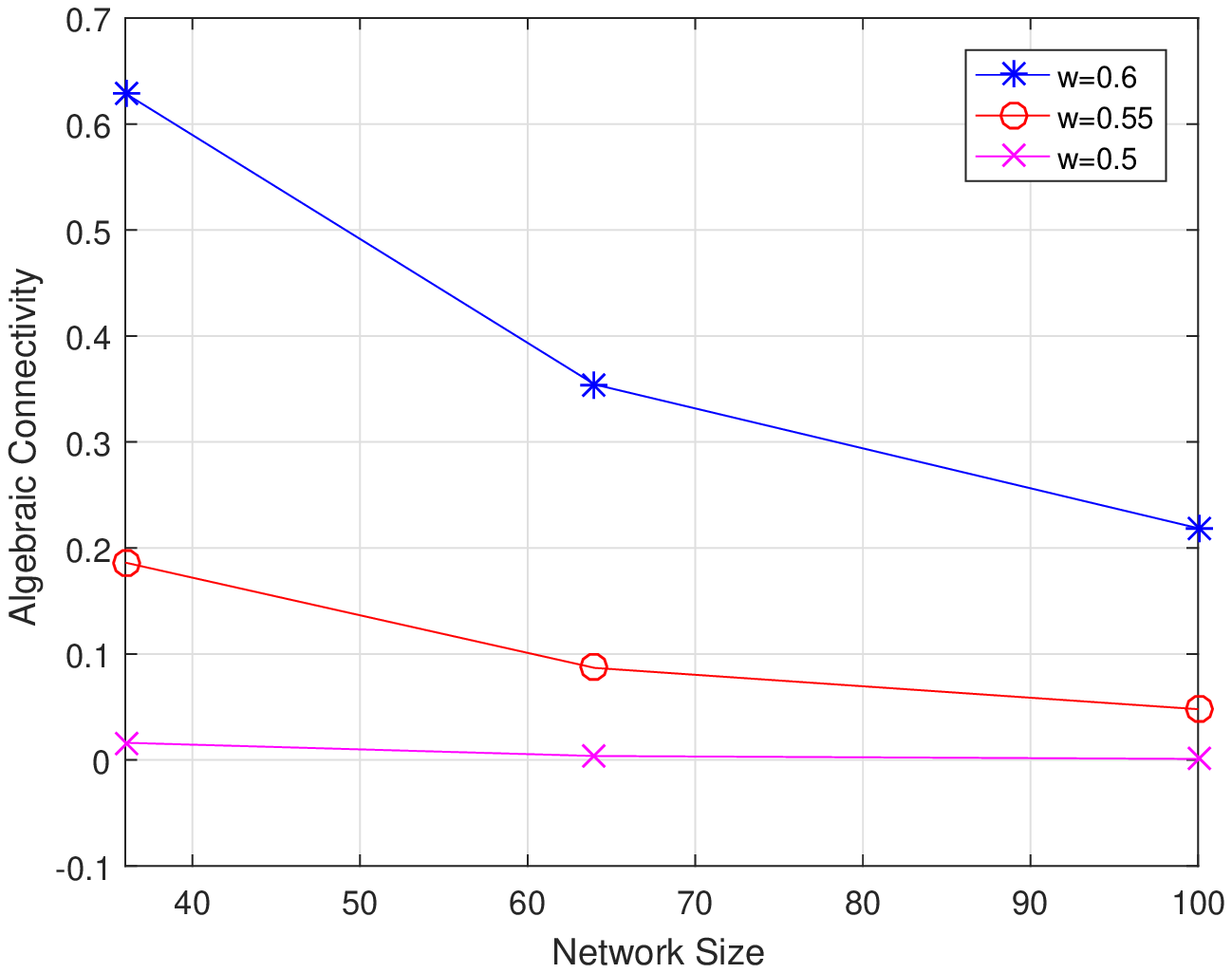}
        \caption{}
        \label{fig:area_alg}
    \end{subfigure}
    \caption{The impact of $\omega$ on network connectivity ($\beta = 0, \rho = 0.5$)}
    \label{fig:zero_beta}
    \vspace{-0.5cm}
\end{figure*}
A node with high $\omega$ may increase the 
transmission range such that a larger 
number
of links can be covered, leading to  
throughput gain over power
consumption. 
This is also confirmed in Fig.~\ref{fig:area_alg}, which shows high algebraic
connectivity with high $\omega$. 
However, 
Fig.~\ref{fig:area_alg} shows that the
algebraic connectivity decreases as network size increases. 
This is because the proposed strategy does not consider to retain 
the same algebraic connectivity.
Hence, if the same algebraic connectivity is required, a higher $\omega$
should be considered in a larger network. 


\begin{figure*}[tb]
    \centering
    \begin{subfigure}[b]{0.4\textwidth}
        \includegraphics[width=\textwidth]{./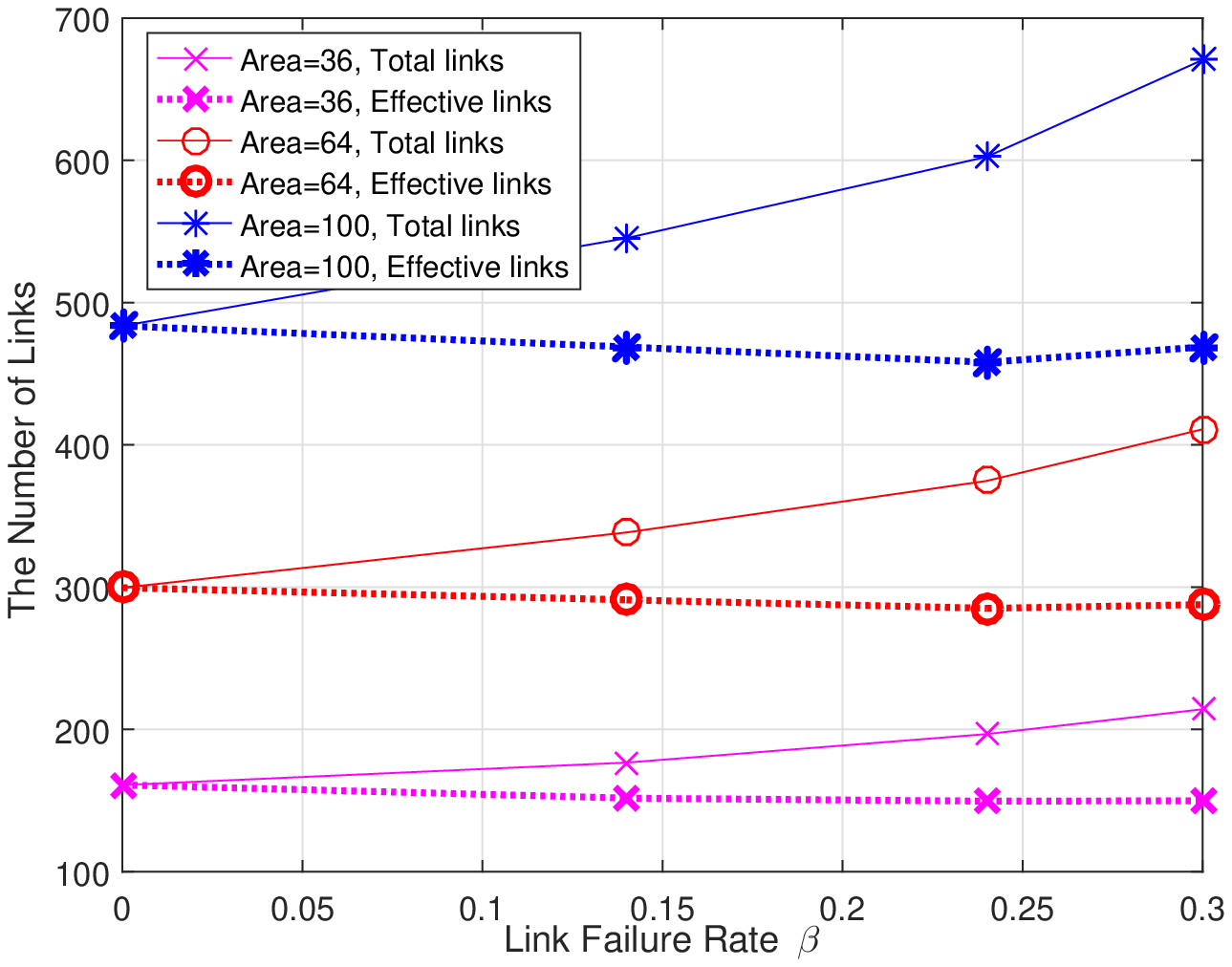}
        \caption{}
        \label{fig:beta_links}
    \end{subfigure}
    ~ 
    \begin{subfigure}[b]{0.4\textwidth}
        \includegraphics[width=\textwidth]{./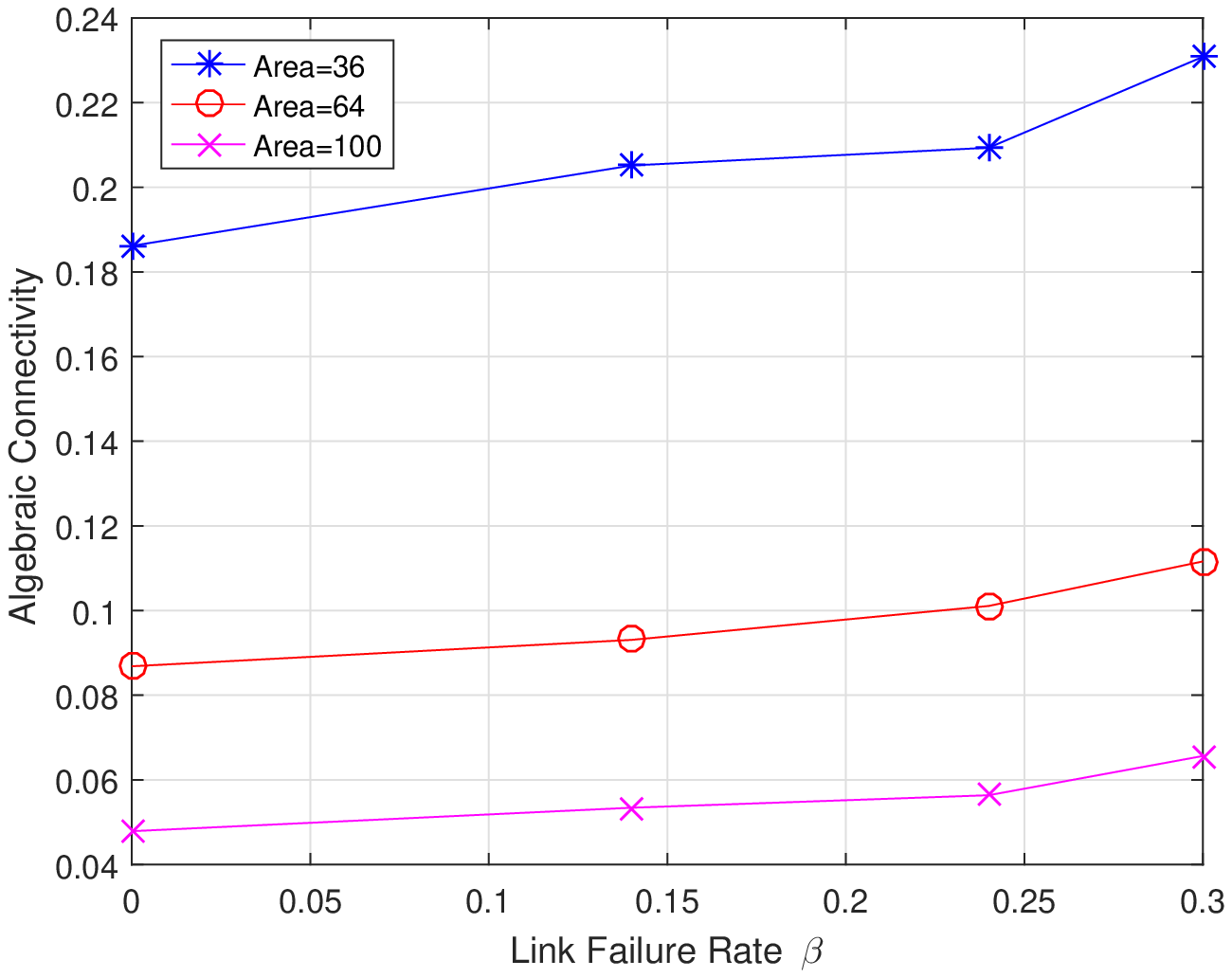}
        \caption{}
        \label{fig:beta_alg}
    \end{subfigure}
    \caption{Network connectivity over link failure rate ($0\le \beta \le 0.3, \omega  = 0.55, \rho = 0.5$)
    }
    \label{fig:betas}
    \vspace{-0.5cm}
\end{figure*}

%
%
The network connectivity as a function of the link failure rate $\beta$ ($0 \le \beta
\le 0.3$) is shown in Fig.~\ref{fig:betas}. 
Fig.~\ref{fig:beta_links} shows that the proposed strategy enables
nodes to make more
links as $\beta$ increases. This enables the networks to maintain  
approximately the same number of effective nodes. 
Moreover, it is confirmed from Fig.~\ref{fig:beta_alg} that 
the algebraic connectivity increases as
$\beta$ increases.  
This is because the proposed approach increases the degree of connectivity 
of the network to overcome
unstable channel conditions. 
Therefore, we conclude that the proposed
strategy is successful at adaptively changing network topology by explicitly considering the 
link failure rates of the channels. 


%
%


\subsection{Performance Comparison in Wi-Fi Direct Application} 

In this section, we consider an illustrative application with
Wi-Fi Direct where data are transmitted over dynamic wireless networks 
in a $60\times 60$ [m$^2$] area.  
Mobile nodes are located at a  density of $8\times 10^{-3}$[nodes/m$^2$] and are connected by 
Wi-Fi Direct with IEEE 802.11ac standard MCS-9. 
The parameters used in the simulations are shown in
Table~\ref{table:simulation_parameter}, and they are specified by the 
IEEE 802.11ac standard~\cite{IEEE80211ac, cisco80211ac}. 
The RLNC is used in the GF($2^8$).  
The performance of the proposed strategy  
is evaluated based on 
the system goodput~\cite{miao2016}, which is defined as the
sum of data rates successfully delivered to terminal nodes, expressed as 
\begin{equation*} 
 \sum_{h = 1}^{N_H}\sum_{v_t \in \mathbf {\tilde T}_h} \frac{ L }{ \bar \tau(x_h,v_t)}
\end{equation*}
where $\mathbf {\tilde T}_h \subseteq \mathbf T_h$ denotes a set of successfully delivered terminal nodes of $x_h$,
 $L $ represents the size  data set $x_h$,
and $\bar \tau({x_h,v_t})$ denotes the travel time\footnote{The travel time refers to the time taken for a data set to be transmitted across a network from source to terminal node. It includes processing delay, transmission delay, propagation delay and queuing delay. }
 for data set $x_h$ to  arrive to a
terminal node $v_t \in \mathbf {\tilde T}_h$. 
Moreover, the transmission power is measured 
by a path loss model, expressed as 
\begin{displaymath}
 P_{TX}= P_{RX} \cdot {\left(\frac{4\pi}{\lambda}\cdot d\right)}^\alpha =\eta \cdot
d^\alpha 
\end{displaymath}
where $P_{TX}$, $P_{RX}$, $\lambda$, and $d$ denote  
transmission power, receive power,  wave length, and the 
distance between transmitter and receiver, respectively. 
\begin{table}[tb]
\caption{Simulation Parameters}
\label{table:simulation_parameter}
\centering
\begin{tabular}{|l|l||l|l|}
\hline
Parameter & Value & Parameter & Value\\
\hline \hline
$\eta$&$1$&$\alpha$&2  \\ \hline
Channel Bandwidth&80 $MHz$&TX-RX Antennas&$3 \times 3$\\ \hline
Modulation Type&256-QAM &Coding Rate&5/6\\ \hline
Guard Interval &400$ns$&PHY Data Rate&1300 $Mbps$\\ \hline
MAC Efficiency&70$\% $&Throughput&910 $Mbps$\\ \hline
\end{tabular}
\vspace{-0.6cm}
\end{table} 

In this simulation, we simultaneously consider three types of network dynamics:
changes in  member nodes of considered network, link failure rates, and node locations. 
To produce realistic dynamic network settings, the location of network nodes is changed in every time stamp, and the network member is updated, and the link failure rates $\beta$ are updated (i.e., randomly selected in $[0,0.3]$) every $5$ time stamps. 
The simulation parameters are set to $\omega =0.53$, $u=0.2$, $\epsilon = 0.01$, $|\mathbf S| = 18$ and $|\mathbf A|=7$. 

We compare the performance of the proposed strategy with the following three existing 
network formation strategies.

\begin{enumerate}
\item  \emph{Myopic}: A myopic strategy is a special case of 
  the proposed strategy with the setting of $\rho=0$ 
  in \eqref{eqn:value}. The myopic solution does not consider the future utilities.
  Rather, it focuses on maximizing the immediate utility only, i.e., 
\begin{align*}
\pi^{myop}  
= \arg\max_{a \in \mathbf A} \sum_{s' \in \mathbf S}P(s'|s,a)  U(s,a,s')  , \forall s \in \mathbf S.   
\end{align*} 

\item \emph{Traskov}\cite{traskov2006}: A well-known centralized network formation
 strategy for network coding deployed networks. 
Traskov can provide a static network topology for a given node distribution by 
exploiting network coding opportunities.
Hence, in the simulations, we 
consider the network where $(\text{network size}
\times \lambda)$ nodes are uniformly distributed, and we find the network topology based on Traskov. 
Since Traskov determines individual links,  the transmission range of a
node is assigned to include all links determined from Traskov. 
To ensure a fair comparison, the assigned transmission range is not changed over time since the  computational complexity for Traskov is much higher than that of other distributed strategies.  

%

\item \emph{TCLE}\cite{xu2016}: A state-of-the-art  distributed strategy  for network
  formation based
  on a non-cooperative game.  
In this strategy, a node chooses its transmission power by balancing the target
algebraic 
connectivity against transmission energy dissipation. 
To ensure a fair comparison, the same set of actions are 
employed as the proposed strategy 
and 
the target algebraic connectivity of TCLE is set as $0.1$, which is 
the average algebraic
connectivity of the proposed strategy with $\omega=0.53$.
Note that the network topology determined by TCLE 
does not adaptively change  


To combat network dynamics, we allow TCLE to recalculate its solution every $5$ time stamps.  
Note that in terms of computational complexity, these TCLE settings require higher complexity than the proposed and myopic strategies, where nodes simply lookup the optimal policies during all
the simulations, which are obtained 
in the beginning of the simulations. 

\end{enumerate}


\begin{table*}[tb]
\caption{Numerical Results of Wi-Fi Direct Application}
\label{tab:results}
\renewcommand{\arraystretch}{1}
\begin{center}
\begin{tabular}{  | c|| c| c|c|  }
\hline
 Strategy &System Goodput $[Mbps]$&Successful Connectivity Ratio $[\%]$& Power Consumption $[dBm]$\\
 \hline
\hline
Proposed& 324.60 & 75.63 & 89.07\\
 \hline
Myopic& 276.51 & 70.28 & 80.17\\
 \hline
Traskov & 226.59 & 38.19 & 103.14 \\
 \hline
TCLE& 317.36 & 45.68& 76.22 \\
\hline 
\end{tabular}
\end{center}
\vspace{-0.8cm}
\end{table*}

The average numerical results from $1,000$ time stamps are summarized in
Table~\ref{tab:results}, and 
illustrative results in the time stamp range of $[380,440]$  are shown 
in Fig.~\ref{fig:tx_radius} and Fig.~\ref{fig:tot_s} for 
 the radius of transmission range of a node and the number of total links in 
the overall network, respectively. 

\begin{figure}[tb]
\centering
\begin{center}
\includegraphics[width = 8.5cm]{./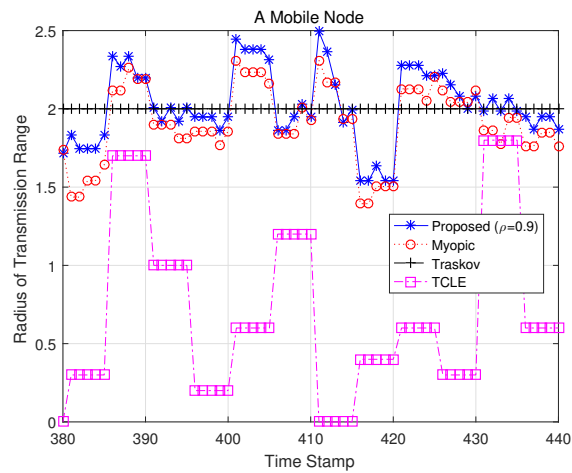}
\caption{ Radius of transmission range of a node in presence of network dynamics.}
\label{fig:tx_radius}
\end{center}
\vspace{-0.8cm}
\end{figure}
\begin{figure}[tb]
\centering
\begin{center}
\includegraphics[width = 8.5cm]{./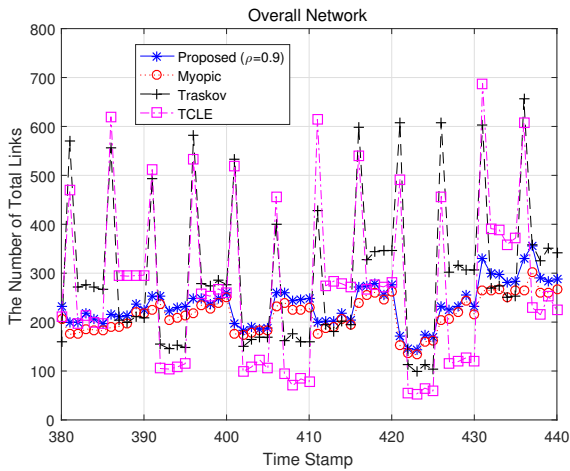}
\caption{ The number of links in the resulting network in the presence of network dynamics.}
\label{fig:tot_s}
\end{center}
\vspace{-0.8cm}
\end{figure}
As shown in  Table~\ref{tab:results}, the proposed strategy provides the highest system
goodput as well as high successful connectivity ratio. 
The proposed strategy always outperforms the myopic strategy. 
This is
because the policy of the myopic strategy focuses only on the immediate utility, while 
the proposed strategy  considers not only the immediate utility but also 
future utilities. 
For example,  time stamps in $[430, 440]$ 
of Fig.~\ref{fig:tot_s} show  
that the proposed strategy more proactively responds to network dynamics than the 
myopic strategy by changing larger number of network links.    
Moreover, it is confirmed that the myopic strategy tends to result in smaller
transmission ranges (shown in Fig.~\ref{fig:tx_radius}), which leads to a lower
number of total 
active links
in the network (shown in Fig.~\ref{fig:tot_s}). 
Since small transmission range requires lower transmission power consumption, the myopic strategy consumes lower power compared to the proposed strategy. However, in terms of \emph {power efficiency} which can be computed as  system goodput per unit power $[Mbps/dBm]$ from Table~\ref{tab:results}, 
the proposed strategy has $3.644$ which is higher than  myopic strategy that is $3.449$, leading to improved system goodput given power budgets. 

While the second highest system goodput is achieved by the TCLE, 
it shows the second lowest successful
connectivity ratio in Table~\ref{tab:results}. This implies that the TCLE can make
successful connections between a source and a terminal based on 
significantly 
short paths. 
However, the TCLE
is not an appropriate solution
for applications that count on successful delivery over throughput.
Rather, it is the most energy-efficient strategy
(Table~\ref{tab:results}) as highlighted in  \cite{xu2016}, and it can determine  smaller 
transmission ranges (in Fig.~\ref{fig:tx_radius}).

Traskov shows the lowest performance in terms of system goodput and successful
connectivity ratio while it requires the highest power consumption.
As shown in Fig.~\ref{fig:tx_radius}, Traskov does not change the transmission range
once it is  determined in the beginning of the simulation such that the result of network
formation fails to overcome network dynamics.

\section{Conclusions}
\label{sec:conclusion}
In this paper, we focus on a distributed network formation strategy that can build a 
robust network against network dynamics. 
We show that network coding induces packet anonymity and network decoupling such that
the MDP framework can be employed at each intermediate node. 
The intermediate nodes determine an optimal policy based on MDP, 
and the policy allows  the nodes to determine optimal 
transmission ranges that maximize the long-term cumulative utilities. The optimal
transmission ranges are determined by explicitly considering
current network conditions and future network dynamics. 
We further show that the resulting network of the proposed strategy converges to the
stationary networks,  
and we propose how to determine an initial network that can rapidly converge to the
stationary network. 
Simulation results confirm that the resulting network of
the proposed strategy can adaptively change by responding to network dynamics such as unstable channel condition with high link failure rate, 
node mobility, and corresponding changes in member nodes associated with the considered network. 

%


\bibliographystyle{IEEEtran}
\bibliography{IEEEabrv,mybibfile_v10}
%
\vspace{-2cm}
\begin{IEEEbiography}
    [{\includegraphics[width=1.in,height=1.1in,clip,keepaspectratio]{./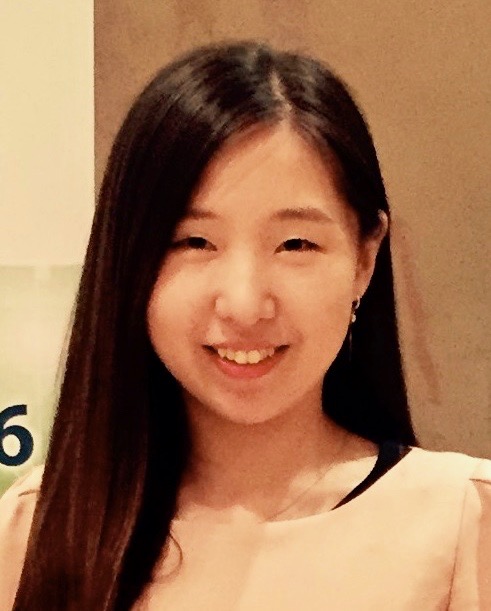}}]
    {Minhae Kwon}
received the B.S., M.S., and Ph.D. degrees at the Department of Electronic
and Electrical Engineering, Ewha Womans University, Seoul, Korea. 

Her research interest lies at the intersection of the network and distributed decision making; in particular, in using stochastic and data-driven approaches to capture dynamics and uncertainty of complex system. Relevant applications can be brain networks, wired/wireless networks, and autonomous networks/systems with multi-agents.

Dr. Kwon received the Minister's award of Science and ICT as well as Global Ph.D. Fellowship from Korea Government, and won honorable awards from Google, Qualcomm and IEEE Consumer Electronics Society.
\end{IEEEbiography}
\vspace{-2cm}

\begin{IEEEbiography}
    [{\includegraphics[width=1in,height=1.1in,clip,keepaspectratio]{./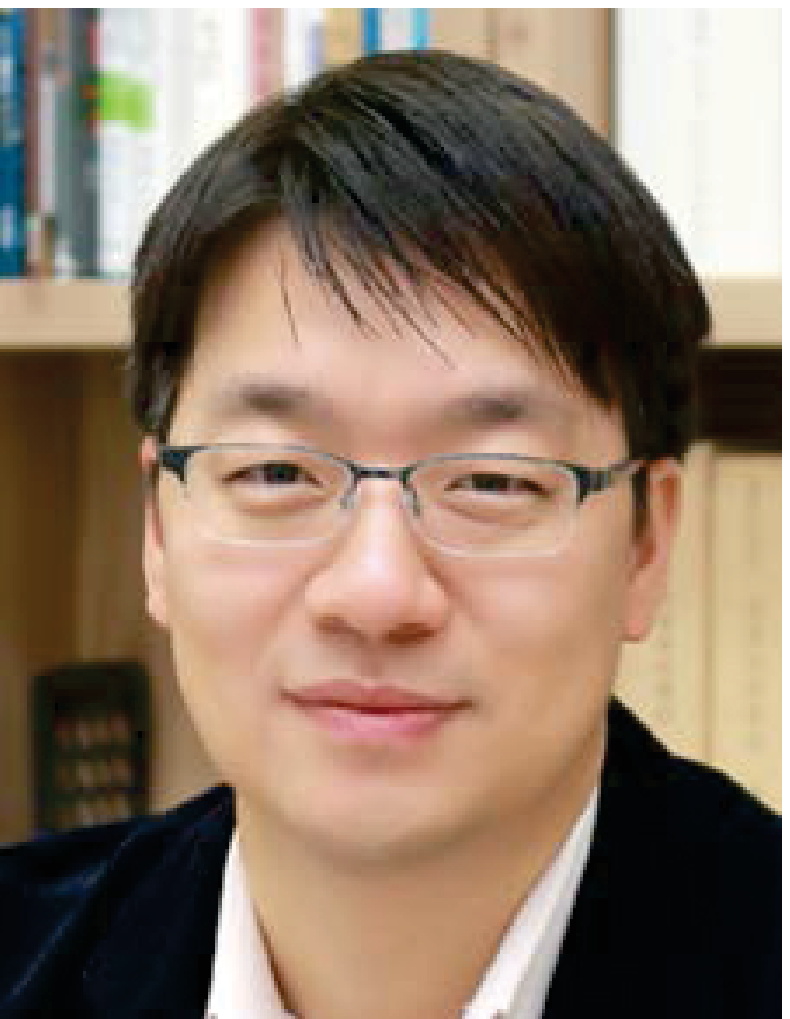}}]
    {Hyunggon Park}
received the B.S. degree in
Electronics and Electrical Engineering from
the Pohang University of Science and Technology
(POSTECH), Pohang, Korea, in 2004,
and the M.S. and Ph.D. degrees in Electrical
Engineering from the University of California,
Los Angeles (UCLA), in 2006 and 2008,
respectively. Currently, he is an Associate  
Professor at the Department of Electronic
and Electrical Engineering, Ewha Womans University, Seoul,
Korea. 

His research interests include 
machine learning based distributed decision making strategies for multi-agent network systems and 
efficient and robust data streaming strategies using network coding. In 2008, he was an intern at IBM T.J.Watson
Research Center, Hawthorne, NY, and he was a Senior Researcher at the
Signal Processing Laboratory (LTS4), Swiss Federal Institute of Technology
(EPFL), Lausanne, Switzerland, in 2009-2010. 

Dr. Park was a recipient of
the Graduate Study Abroad Scholarship from the Korea Science and
Engineering Foundation during 2004-2006 and a recipient of the Electrical
Engineering Department Fellowship at UCLA in 2008.
\end{IEEEbiography}

\newpage
\setcounter{page}{1}

 \onecolumn
 
\noindent{\Large [Supplemental Material]}

\begin{center}
{\huge Network Coding Based Evolutionary Network Formation for Dynamic Wireless
Networks}

\vspace{0.5cm}

{\Large Minhae Kwon and Hyunggon Park}

{\large \texttt{minhae.kwon@ewhain.net, hyunggon.park@ewha.ac.kr}}

\end{center}
 \vspace{0.5in}

\appendices


\section{Bellman Operation and its Properties}
Let $\mathcal T^*$ be the Bellman optimality operator~\cite{bellman2015} for $V_{\tau}(s)$, which maps a foresighted  state-value function to a foresighted  state-value function (i.e., $\mathcal T^*: \mathbb R^{|\mathbf S|} \rightarrow \mathbb R^{|\mathbf S|} $), defined as,
\begin{equation}
(\mathcal T^* V_{\tau})(s) = \max_{a \in \mathbf A} \left(  \sum_{s'\in \mathbf S} P(s'|s,a)  \left(  U(s,a,s') + \rho   V_{\tau}(s') \right) \right). 
\label{eqn:Bellman_op}
\end{equation}
This updates the state-value function with the action that provides the maximum expected long-term state value. 

The Bellman optimality operator $\mathcal T^*$ has monotonicity, additivity and $\rho$-contraction properties as shown below. 
\begin{property}
(Monotonicity of Bellman Optimality Operator)
$V_{\tau}(s) \le V_{\tau}'(s), \forall s  \Rightarrow (\mathcal T^* V_{\tau})(s) \le (\mathcal T^* V_{\tau}')(s)$
\label{prop:mono}
\end{property}

\begin{proof}
For the optimal action $a^* = \pi^* (s) = \arg\max_{a} \sum_{s\in \mathbf S} P(s'|s,a)  \left(  U(s,a,s') + \rho   V^*(s') \right)$ and when $V_{\tau}(s) \le V_{\tau}'(s), \forall s$, 
$(\mathcal T^* V_{\tau})(s) - (\mathcal T^* V_{\tau}')(s)$ becomes:  
\begin{align}
&(\mathcal T^* V_{\tau})(s) - (\mathcal T^* V_{\tau}')(s)\notag\\
&= \left( \sum_{s'\in \mathbf S} P(s'|s,a^*)  \left(  U(s,a^*,s') + \rho   V_{\tau}(s') \right)\right) \notag \\
&-  \left( \sum_{s'\in \mathbf S} P(s'|s,a^*)  \left(  U(s,a^*,s') + \rho   V_{\tau}'(s') \right)\right) \notag \\
&= \rho \sum_{s'\in \mathbf S} P(s'|s,a^*) \left(  V_{\tau}(s')  -  V_{\tau}'(s') \right)\label{eqn:monoto_1}\\
&\le 0 \label{eqn:monoto_2}
\end{align}
The inequality between \eqref{eqn:monoto_1} and \eqref{eqn:monoto_2} is satisfied because $0<\rho<1$, $ P(s'|s,a^*) \ge 0$, and $ V_{\tau}(s')  -  V_{\tau}'(s') \le 0$. 
Therefore, if $V_{\tau}(s) \le V_{\tau}'(s)$, then $(\mathcal T^* V_{\tau})(s) \le (\mathcal T^* V_{\tau}')(s)$.\end{proof}

\begin{property}
(Additivity of Bellman Optimality Operator)
$(\mathcal T^* V_{\tau} + d)(s) = (\mathcal T^* V_{\tau})(s) + \rho d, \forall s \in \mathbf S_{i}$
\label{prop:additivity}
\end{property}
\begin{proof}
\begin{align}
(\mathcal T^* &V_{\tau} + d)(s)\notag\\
 =&  \max_{a \in \mathbf A} \left( \sum_{s'\in \mathbf S} P(s'|s,a)  \left(  U(s,a,s') + \rho (  V_{\tau}(s')+d) \right)\right)\notag\\
=&  \max_{a \in \mathbf A} \left( \sum_{s'\in \mathbf S} P(s'|s,a)  \left(  U(s,a,s') + \rho  V_{\tau}(s') \right) \right.\\
&\left. 
+ \rho d \sum_{s'\in \mathbf S} P(s'|s,a)   \right) \label{eqn:additivity}
\\
=& (\mathcal T^* V_{\tau})(s) + \rho d \notag
\end{align}
where $\sum_{s'\in \mathbf S} P(s'|s,a)  =1$ in \eqref{eqn:additivity}. 
Therefore, $(\mathcal T^* V_{\tau} + d)(s) = (\mathcal T^* V_{\tau})(s) + \rho d, \forall s \in \mathbf S$.
\end{proof}

\begin{property}
($\rho$-Contraction Property of Bellman Optimality Operator)
$||\mathcal T^* V_{\tau}(s) -\mathcal T^* V_{\tau}'(s) ||_{\infty} \le \rho ||V_{\tau}(s) -V_{\tau}'(s)  ||_{\infty}, \forall s \in \mathbf S$   
\label{prop:contraction}
\end{property}
\begin{proof}
We define $d$  as, 
\begin{equation}
d = ||V_{\tau}(s) -V_{\tau}'(s) ||_{\infty},
\label{eqn:d}
\end{equation} 
where $|| \cdot ||_{\infty}$ denotes the infinite norm, defined as 
\begin{equation}
|| V_{\tau}(s) ||_{\infty} = \sup \left\{ |  V_{\tau}(s) |: s \in \mathbf S \right\} \label{eqn:infinite_norm}. 
\end{equation}
Then the following equations can be obtained from \eqref{eqn:d}.
\begin{align}
&V_{\tau}(s) - d \le V_{\tau}'(s) \le V_{\tau}(s) + d \label{eqn:contraction+d}
\\
&\mathcal T^*(V_{\tau}(s) - d) \le (\mathcal T^*V_{\tau}')(s) \le \mathcal T^* \left(V_{\tau}(s) + d \right) \label{eqn:before}\\
&\mathcal T^*V_{\tau}(s) - \rho d \le (\mathcal T^*V_{\tau}')(s) \le \mathcal T^* V_{\tau}(s) + \rho d \label{eqn:after}\\
& ||\mathcal T^*V_{\tau}(s) - \mathcal T^*V_{\tau}'(s)||_{\infty} \le \rho d \label{eqn:last}
\end{align}
The Bellman optimality operator is used between \eqref{eqn:contraction+d} and \eqref{eqn:before}, and Property~\ref{prop:additivity} is used between \eqref{eqn:before} and \eqref{eqn:after}. By substituting $d$ in \eqref{eqn:last} for \eqref{eqn:d}, we conclude the following equation. 
\begin{equation*}
||\mathcal T^* V_{\tau}(s) -\mathcal T^* V_{\tau}'(s) ||_{\infty} \le \rho ||V_{\tau}(s) -V_{\tau}'(s)  ||_{\infty}   
\end{equation*} 
\end{proof}


\section{Proof of Theorem~\ref{th:stop}}
\label{app:th:stop}
In this proof, we show that for all $s \in \mathbf S$, 
\begin{equation}
 || V_{\tau}(s) - V_{\tau-1}(s)||_{\infty} \le \frac{1-\rho}{2\rho} \epsilon \Rightarrow || V^{\epsilon^*}(s) - V^*(s) ||_{\infty} < \epsilon. 
 \label{eqn:pro}
\end{equation}
By using the definition of infinite norm, 
$ ||  V^{\epsilon^*}(s) - V^*(s) ||_{\infty}$ can be written as follow. 
\begin{align}
& || V^{\epsilon^*}(s) - V^*(s) ||_{\infty} \\
 &\le  ||  V^{\epsilon^*}(s) - V_{\tau}(s) ||_{\infty}+ ||  V_{\tau}(s) - V^*(s) ||_{\infty}
 \label{eqn:sum}
 \end{align}
 We now bound each part of the summation in \eqref{eqn:sum} individually:
 \begin{align}
  &||  V^{\epsilon^*}(s) - V_{\tau}(s) ||_{\infty} \\&=  || \mathcal T^{\epsilon^*} V^{\epsilon^*}(s) - V_{\tau}(s) ||_{\infty}\label{eqn:epsilon1}\\
  &\le || \mathcal T^{\epsilon^*} V^{\epsilon^*}(s) - \mathcal T^* V_{\tau}(s) ||_{\infty} + ||  \mathcal T^* V_{\tau}(s)  - V_{\tau}(s) ||_{\infty}\label{eqn:epsilon2}\\
  &=  || \mathcal T^{\epsilon^*} V^{\epsilon^*}(s) - \mathcal T^{\epsilon^*} V_{\tau}(s) ||_{\infty} + ||  \mathcal T^* V_{\tau}(s)  -  \mathcal T^* V_{\tau-1}(s) ||_{\infty}\label{eqn:epsilon3}\\
  & \le \rho ||  V^{\epsilon^*}(s) - V_{\tau}(s) ||_{\infty} + \rho ||  V_{\tau}(s)  - V_{\tau-1}(s) ||_{\infty} \label{eqn:epsilon4}\\
  &\le \frac{\rho}{1-\rho}||  V_{\tau}(s)  - V_{\tau-1}(s) ||_{\infty}  \notag
 \end{align}
 where $\mathcal T^{\epsilon^*}$ denotes the Bellman $\epsilon$-optimality operation and it satisfies $V^{\epsilon^*}(s)  = \mathcal T^{\epsilon^*}V^{\epsilon^*}(s) $ because $V^{\epsilon^*}(s) $ is the fixed point of $\mathcal T^{\epsilon^*}$, which is used in \eqref{eqn:epsilon1}. 
The inequality between \eqref{eqn:epsilon1} and \eqref{eqn:epsilon2} is obtained by using the definition of infinite norm. 
 Since $\pi^{\epsilon^*}$ is maximized  over the actions using $V_{\tau}(s)$ in~\eqref{eqn:epsilon3}, this  implies that $\mathcal T^{\epsilon^*}V_{\tau}(s) =\mathcal T^{*} V_{\tau}(s)$. 
The inequality between \eqref{eqn:epsilon3} and \eqref{eqn:epsilon4} is based on Property~\ref{prop:contraction}. 

Similarly, the second part of the summation in \eqref{eqn:sum} becomes
\begin{align}
&||  V_{\tau}(s) - V^*(s) ||_{\infty}\\
 &\le ||  V_{\tau}(s) - \mathcal T^{*} V_{\tau}(s) ||_{\infty}
 +||  \mathcal T^{*} V_{\tau}(s)- V^*(s) ||_{\infty}\notag\\
 &\le \rho ||  V_{\tau-1}(s) - V_{\tau}(s) ||_{\infty}
 +||  V_{\tau}(s)- V^*(s) ||_{\infty}\notag\\
 &\le \frac{\rho}{1-\rho}||  V_{\tau}(s) - V_{\tau-1}(s) ||_{\infty}\label{eqn:epsilon5}. 
 \end{align}
By substituting \eqref{eqn:epsilon4} and \eqref{eqn:epsilon5} in \eqref{eqn:sum} and using the condition $ || V_{\tau} - V_{\tau-1}||_{\infty} \le \frac{1-\rho}{2\rho} \epsilon$ in \eqref{eqn:pro}, we conclude the following. 
\begin{align*}
 || V^{\epsilon^*}(s) - V^*(s) ||_{\infty}  &\le \frac{2\rho}{1-\rho}||  V_{\tau}(s)  - V_{\tau-1}(s) ||_{\infty} \\
 &<  \frac{2\rho}{1-\rho}  \frac{1-\rho}{2\rho} \epsilon\\
&= \epsilon
\end{align*}
Therefore, for all $s \in \mathbf S$, \eqref{eqn:pro} is satisfied. 


\section{Proof of Theorem~\ref{th:value_convergence}}
\label{app:th:value_convergence}
In this proof, we show that Algorithm~\ref{alg:value_it} converges to the optimal policies $\pi^*$ by showing that the infinite interactions of the Bellman optimality operation converge to the optimal state-value function $V^*(s)$, such as 
\begin{equation}
\lim_{\tau \rightarrow \infty}  V_{\tau}(s)  = V^*(s), \forall s \in \mathbf S.  \label{eqn:th1}
\end{equation}
This is identical  to the following equation based on the definition of the infinite norm in~\eqref{eqn:infinite_norm}. 
\begin{equation}
\lim_{\tau \rightarrow \infty} || V_{\tau}(s) - V^*(s)||_{\infty} = 0, \forall s \in \mathbf S  \label{eqn:alt}
\end{equation} 
Hence, in this proof, we prove \eqref{eqn:alt} as below. 
\begin{align}
&\lim_{\tau \rightarrow \infty}||V_{\tau}(s) - V^*(s)||_{\infty}\\
 &= \lim_{\tau \rightarrow \infty} ||\mathcal T^* V_{{\tau}-1}(s) -\mathcal T^* V^*(s)||_{\infty} \label{eqn:th_proof1}\\
&\le \lim_{\tau \rightarrow \infty} \rho   ||V_{\tau-2}(s) - V^*(s)||_{\infty}\label{eqn:th_proof2}\\ 
&\le  \cdots \le \lim_{\tau \rightarrow \infty} \rho^{\tau} ||V_0(s) - V^*(s)||_{\infty}\label{eqn:th_proof3}\\
&= 0\notag
\end{align}
Based on the definition of $V^*(s)$ in \eqref{eqn:opt_value}, $V^*(s) = \mathcal T^* V^*(s)$ is used in \eqref{eqn:th_proof1}, and  the inequality between \eqref{eqn:th_proof1} and \eqref{eqn:th_proof2} is based on Property~\ref{prop:contraction}. 
Since $0<\rho<1$, $\lim_{\rho \rightarrow \infty} \rho^{\tau} = 0$ in \eqref{eqn:th_proof3}. 
Therefore, $\lim_{\tau \rightarrow \infty} || V_{\tau}(s) - V^*(s)||_{\infty} = 0, \forall s \in \mathbf S$ so that  $\lim_{\tau \rightarrow \infty}  V_{\tau}(s)  = V^*(s), \forall s_{i,\tau} \in \mathbf S$.

\end{document}